\newif\ifPDF
\newtheorem{theorem}{Theorem}[section]
\newtheorem{lemma}[theorem]{Lemma}
\newtheorem{definition}[theorem]{Definition}
\newtheorem{remark}[theorem]{Remark}
\newcommand{\eps}{\varepsilon}
\newcommand{\sfc}{\mathsf c}
\newcommand{\bbR}{\mathbb R} \newcommand{\bbS}{\mathbb S}
 \newcommand{\bn}{\mathbf n}
\newcommand{\bq}{\mathbf q} 
\newcommand{\bs}{\mathbf s} \newcommand{\bt}{\mathbf t}
 \newcommand{\bv}{\mathbf v} 
 \newcommand{\bx}{\mathbf x} 
\newcommand{\by}{\mathbf y} \newcommand{\bz}{\mathbf z}
\newcommand{\cA}{\mathcal A} \newcommand{\cB}{\mathcal B}
 \newcommand{\cH}{\mathcal H}
\newcommand{\cI}{\mathcal I} \newcommand{\cJ}{\mathcal J}
\newcommand{\cK}{\mathcal K} \newcommand{\cL}{\mathcal L}
\newcommand{\cO}{\mathcal O}  
 \newcommand{\cR}{\mathcal R}
\newcommand{\cS}{\mathcal S} \newcommand{\cT}{\mathcal T}
\newcommand{\cU}{\mathcal U}
\newcommand{\fW}{\mathfrak W}
\newcommand{\opH}{\mathcal H}
\newcommand{\intS}[1]{\int_{\bbS^{d-1}} #1}
\DeclareMathOperator{\dist}{\mathsf d}
\DeclareMathOperator*{\argmin}{arg\,min}
\newenvironment{keywords}
{\noindent{\bf Key words.}\small}{\par\vspace{1ex}}
\newenvironment{AMS}
{\noindent{\bf AMS subject classifications 2000.}\small}{\par}
\newcommand{\chapterauthor}[1]{%
	{\parindent0pt\vspace*{-25pt}%
		\linespread{1.1}\large\scshape#1%
		\par\nobreak\vspace*{35pt}}
	\@afterheading%
}
\newsavebox\myboxA
\newsavebox\myboxB
\newlength\mylenA
\newcommand*\xoverline[2][0.75]{%
	\sbox{\myboxA}{$\m@th#2$}%
	\setbox\myboxB\null
	\ht\myboxB=\ht\myboxA%
	\dp\myboxB=\dp\myboxA%
	\wd\myboxB=#1\wd\myboxA
	\sbox\myboxB{$\m@th\overline{\copy\myboxB}$}
	\setlength\mylenA{\the\wd\myboxA}
	\addtolength\mylenA{-\the\wd\myboxB}%
	\ifdim\wd\myboxB<\wd\myboxA%
	\rlap{\hskip 0.5\mylenA\usebox\myboxB}{\usebox\myboxA}%
	\else
	\hskip -0.5\mylenA\rlap{\usebox\myboxA}{\hskip 0.5\mylenA\usebox\myboxB}%
	\fi}
\title{Inverse transport problem in fluorescence ultrasound modulated optical tomography with angularly averaged measurements}
\author{Wei Li, Yang Yang, Yimin Zhong}
\begin{document}
\maketitle
\begin{abstract}
	We consider an inverse transport problem in fluorescence ultrasound modulated optical tomography (fUMOT) with angularly averaged illuminations and measurements. We study the uniqueness and stability of the reconstruction of the absorption coefficient and the quantum efficiency of the fluorescent probes. Reconstruction algorithms are proposed and numerical validations are performed. This paper is an extension of~\cite{li2018hybrid}, where a diffusion model for this problem was considered.
\end{abstract} 

\begin{keywords}
	inverse transport problem, angularly averaged measurements, hybrid modality, internal data, fluorescence, ultrasound modulated optical tomography, absorption coefficient, quantum efficiency, photon currents, uniqueness, stability, reconstruction.
\end{keywords}

\begin{AMS}
	35R30, 35Q60, 35J91
\end{AMS}
\section{Introduction}\label{SEC:1}
Fluorescence optical tomography (FOT) is a popular imaging modality for biomedical and preclinical research~\cite{arridge2009optical,chang1995fluorescence,chaudhari2005hyperspectral, corlu2007three,milstein2003fluorescence}.  Upon illumination by a laser pulse, fluorescent probes are exited to a metastable state and later decay to the ground state by emitting photons at a lower frequency. The emitted light and the residual excitation light are detected at the boundary for the reconstruction of the spatial concentration and lifetimes of the fluorophores. 

Fluorescence ultrasound modulated optical tomography (fUMOT) is a series of FOT experiments performed under varying acoustic modulation~\cite{yuan2009ultrasound,yuan2008mechanisms,yuan2009microbubble,liu2014ultrasound}. The acoustic modulation perturbs the optical properties of the tissue sample, allowing the measurements to provide internal information about the optical field. As  the fluorescent probes have high optical contrast and tissues are acoustically homogeneous, fUMOT is expected to provide stable high contrast reconstructions with resolution comparable to the wavelength of the acoustic field. The availability of the internal data and the wellposedness of the inverse problem is generic for hybrid imaging modalities \cite{bal2010inverse,marks1993comprehensive,kempe1997acousto,
 granot2001detection,wang1995continuous,ku2005deeply,jiao2002two, ammari2014UMOT}.

Light propagation in tissues obeys the radiative transport equation (RTE)~\cite{ishimaru1978wave}. When the tissue environment is highly scattering, the RTE can be approximated by the diffusion equation with a suitable boundary condition~\cite{ishimaru1978wave,bal2006radiative}. fUMOT in the diffusion regime has been studied in our previous work~\cite{li2018hybrid}. However, the diffusion approximation fails in the following two cases: when the tissue is optically thin, the characteristic length is at the same order as the transport mean free path, thus the boundary layer effect cannot be neglected; and when the scattering or the illumination source is highly anisotropic, the optical field is necessarily anisotropic {near the source}.
In an inverse transport problem, the illuminations and measurements at the boundary can be time dependent or time independent, and angularly resolved or angularly averaged~\cite{bal2009inverse,bal2008stability,bal2007inverse}. Time dependent measurements and angularly resolved measurements are mathematically preferable since they preserve more singularities and permit more stable and more resolved reconstruction. However, in practice, the photon transport process is too fast for accurate time dependent measurements, and angularly resolved measurements are too sensitive to noise due to possibly low particle counts in certain directions. { That is, in most practical applications, time independent and angularly averaged illuminations and measurements are less expensive and more reliable}~\cite{bal2007inverse,bal2009inverse,bal2008inverse}.

In this paper, we study fUMOT in the radiative transport regime with time independent and angularly averaged illumination and measurements. We derive the mathematical model for fUMOT in the transport regime following the works~\cite{li2018hybrid,ren2013quantitative}. Let $u(\bx ,\bv, t)$ and $w(\bx, \bv, t)$ be the excitation and emission photon densities at location $\bx\in \Omega$, along direction $\bv\in  \bbS^{d-1}$ at time $t\in\bbR^{+}$.  The governing equations of fluorescence optical tomography (FOT) are 
\begin{equation}\label{EQ:RTE}
\begin{split}
\frac{1}{c} \partial_t u(\bx, \bv, t) + \bv\cdot \nabla u(\bx, \bv, t) + (\sigma_{x,a}(\bx)  + \sigma_{x,f}(\bx) + \sigma_{x,s}(\bx)) u(\bx, \bv, t) \\= \sigma_{x,s}(\bx)\int_{\bbS^{d-1}} p(\bv\cdot \bv') u(\bx, \bv', t) d\bv'\quad &\text{in } X \times\bbR^{+}, \\
\frac{1}{c}\partial_t w(\bx, \bv, t) +  \bv\cdot \nabla w(\bx, \bv, t) + (\sigma_{m,a}(\bx) + \sigma_{m,f}(\bx) + \sigma_{m,s}(\bx) ) w(\bx, \bv, t)\\ = \sigma_{m,s}(\bx)\int_{\bbS^{d-1}} p(\bv\cdot\bv') w(\bx, \bv', t) d\bv' + S(\bx, t)\quad &\text{in } X\times \bbR^{+},\\
u(\bx, \bv, t) = g(\bx, \bv, t), \quad w(\bx, \bv, t) = 0\quad& \text{on } \Gamma_{-}\times\bbR^{+}, \\
u(\bx, \bv, t) = 0, \quad w(\bx, \bv, t) = 0\quad&\text{on } X\times \{0\}.
\end{split}
\end{equation}
Here, $\Omega\subset \bbR^d (d = 2, 3)$ is the domain of interest, $X = \Omega\times \bbS^{d-1}$ denotes the phase space, $\Gamma_{\pm} = \{(\bx, \bv)\in \partial \Omega\times \bbS^{d-1} | \pm \bn_{\bx}\cdot \bv > 0 \}$ are the incoming and outgoing boundary sets, $g(\bx, \bv, t) = g(\bx, \bv)\delta(t - 0^{+})$ is the external excitation laser pulse, and {we assume the reflection at the interface $\partial \Omega$ is negligible}.  
$\sigma_{x,a}$ (resp. $\sigma_{m,a}$) is the intrinsic absorption coefficient of the medium at the excitation wavelength (resp. emission wavelength), $\sigma_{x,s}$ (resp. $\sigma_{m,s}$) is the intrinsic scattering coefficient of the medium at the excitation wavelength (resp. emission wavelength), and $\sigma_{x,f}$ (resp. $\sigma_{m,f}$) is the \emph{absorption coefficient of the fluorophores} at the excitation wavelength (resp. emission wavelength). The emission source term $S(\bx, t)$ is proportional to the radiant energy and given by
\begin{equation}
S(\bx, t) = \eta(\bx) \sigma_{x,f} (\bx) \int_0^t \frac{1}{\tau} e^{-\frac{t-s}{\tau}} \left(\int_{\bbS^{d-1}} u(\bx, \bv, s) d\bv \right) ds,
\end{equation} 
where $\eta(\bx)$ is the \emph{quantum efficiency} or \emph{quantum yield} of the fluorophores and $\tau$ is the fluorescence lifetime of the excited state. The integral kernel $p(\bv\cdot \bv')$ is the scattering phase function, which gives the angular distribution of light intensity scattered by particle collision. With slight abuse of notation, we set $u(\bx, \bv) = \int_{0}^{\infty} u(\bx, \bv, t) dt$,  $w(\bx, \bv) = \int_0
^{\infty} w(\bx, \bv, t) dt$. Then we integrate the system~\eqref{EQ:RTE} over time. Noticing the fact that $u(\bx, \bv, \infty) = w(\bx, \bv, \infty) = 0$,  we obtain a stationary RTE system for these time-integrated quantities.
\begin{equation}\label{EQ:SRTE}
\begin{aligned}
\bv\cdot \nabla u(\bx, \bv) + (\sigma_{x,a}  + \sigma_{x,f} + \sigma_{x,s}) u(\bx, \bv) &= \sigma_{x,s}\int_{\bbS^{d-1}} p(\bv\cdot \bv') u(\bx, \bv') d\bv' &\text{ in }&X,\\
\bv\cdot \nabla w(\bx, \bv) + (\sigma_{m,a} + \sigma_{m,f} + \sigma_{m,s}) w(\bx, \bv) &=
\sigma_{m,s} \int_{\bbS^{d-1}} p(\bv\cdot \bv') w(\bx, \bv') d\bv' \\&\quad+ \eta \sigma_{x,f} \int_{\bbS^{d-1}} u(\bx, \bv) d\bv &\text{ in }&X,\\
u_{}(\bx, \bv) = g(\bx, \bv)&, \quad w_{}(\bx, \bv) = 0 &\text{on }&\Gamma_{-}.
\end{aligned}
\end{equation}
In practice, the coefficient $\sigma_{m,f}$ is extremely small compared to the other coefficients \cite[Fig. 1.7]{sauer2011handbook}, therefore we set it to zero hereafter. {For simplicity, in what follows, we consider the isotropic illuminations only, namely $g(\bx,\bv)=g(\bx)$.}

Similar to~\cite{li2018hybrid,chung2017inverse}, we consider the plane wave ultrasound modulation in the form of ${P}(\bx, t) = A\cos(\omega t) \cos(\bq \cdot \bx + \phi)$, where $A$ is the amplitude, $\omega$ is the frequency, $\bq$ is the wave vector and $\phi$ is the initial phase. Under the acoustic modulation, the optical coefficients take the form~\cite{bal2016ultrasound,bal2014ultrasound}
\begin{equation}\label{EQ:MODCOEF}
\begin{aligned}
\sigma_{x,s}^{\eps}(\bx) &= (1 +\eps\cos(\bq \cdot \bx + \phi))\sigma_{x,s}(\bx),\\
\sigma_{m,s}^{\eps}(\bx) &= (1 + \eps\cos(\bq \cdot \bx + \phi))\sigma_{m,s}(\bx), \\
\sigma_{x,a}^{\eps}(\bx) &= (1 + \eps\cos(\bq \cdot \bx + \phi) )\sigma_{x,a}(\bx),\\
\sigma_{m,a}^{\eps}(\bx)&=(1+\eps\cos(\bq \cdot \bx + \phi))\sigma_{m,a}(\bx),\\
\sigma_{x,f}^{\eps}(\bx) &=(1+\eps\cos(\bq \cdot \bx + \phi))\sigma_{x,f}(\bx),
\end{aligned}
\end{equation}
where $\eps=\frac{A \cos(\omega t)}{\rho c_s^2}\ll 1$, $\rho$ is the particle number density, and $c_s$ is the sound speed. {Note that the time variable $t$ in $\eps$ is the time on the acoustic time scale, which is approximately constant during the much faster optical process.} According to~\cite{bal2016ultrasound} the \emph{quantum efficiency} $\eta(\bx)$ is not modulated by the acoustic field. Combining this with the stationary RTE~\eqref{EQ:SRTE}, we obtain the governing equation for fUMOT in the transport regime,
\begin{equation}\label{EQ:MODSRTE}
\begin{aligned}
\bv\cdot \nabla u_{\eps}(\bx, \bv) + (\sigma^{\eps}_{x,a}  + \sigma^{\eps}_{x,f} + \sigma^{\eps}_{x,s}) u_{\eps}(\bx, \bv) &= \sigma^{\eps}_{x,s}\cK u_{\eps}(\bx, \bv) &\text{ in }&X,\\
\bv\cdot \nabla w_{\eps}(\bx, \bv) + (\sigma^{\eps}_{m,a}  + \sigma^{\eps}_{m,s}) w_{\eps}(\bx, \bv) &=
\sigma^{\eps}_{m,s}\cK w_{\eps}(\bx, \bv) + \eta \sigma^{{\eps}}_{x,f} \cI u_{\eps}(\bx)  &\text{ in }&X,\\
{{u_{\eps}(\bx, \bv) = g(\bx)}}&, \quad w_{\eps}(\bx, \bv) = 0 &\text{on }&\Gamma_{-}.
\end{aligned}
\end{equation}
where the integral operators $\cK$ and $\cI$ are defined as
\begin{equation}
\begin{aligned}
\cK f(\bx, \bv) &= \int_{\bbS^{d-1}} p(\bv\cdot \bv') f(\bx, \bv') d\bv', \quad \cI f(\bx) &= \int_{\bbS^{d-1}} f(\bx, \bv) d\bv.
\end{aligned}
\end{equation}
For the measurements, we record the angularly averaged boundary photon currents at both the excitation and the emission wavelengths~\cite{arridge2009optical,ren2010recent},
\begin{equation}\label{EQ:MEASURE}
\cJ u_{\eps} = \int_{\bbS^{d-1}} u_{\eps}(\bx, \bv) \bv\cdot \bn_{\bx} d\bv,\quad \cJ w_{\eps} = \int_{\bbS^{d-1}} w_{\eps}(\bx, \bv) \bv\cdot \bn_{\bx} d\bv.
\end{equation}
For a fixed external excitation source $g$, such boundary photon currents can be measured with multiple acoustic fields with various wave vectors $\bq$ and initial phases $\phi$. Therefore the measurement operator is
\begin{equation}
\Lambda^{\eps}(\bq, \phi) = (\cJ u_{\eps}, \cJ w_{\eps}) \Big|_{\partial \Omega}.
\end{equation}
The objective is to reconstruct the absorption coefficient of the fluorophores $\sigma_{x,f}(\bx)$ and the quantum efficiency $\eta(\bx)$ from the measurement operator $\Lambda^{\eps}$, assuming that the unperturbed background coefficients $\sigma_{x,a}$, $\sigma_{m,a}$, $\sigma_{x,s}$ and $\sigma_{m,s}$ have been reconstructed through other imaging methods~\cite{bal2010inverse,bal2011multi,ren2015inverse,ren2013quantitative}.

Due to the weak coupling between $\sigma_{x,f}$ and $\eta$ in the system~\eqref{EQ:MODSRTE}, there exists a two-step approach to simultaneously reconstructing
these two coefficients. Firstly, a nonlinear inverse medium problem at the excitation wavelength is solved to recover the absorption coefficient $\sigma_{x,f}$ using internal data derived from the excitation component of the measurement operator \eqref{EQ:MEASURE}. Secondly, with the knowledge of $\sigma_{x,f}$, we solve a linear inverse source problem
at the emission wavelength to find the quantum efficiency $\eta$  using internal data derived from the emission component of the measurement operator \eqref{EQ:MEASURE}. 

For the excitation stage, we consider two scenarios: (i) For the linearized problem with some smallness assumptions, we establish existence, uniqueness, and stability results with standard transport theory; (ii) For the nonlinear problem under the assumption that $\sigma_{x,f}$ is $\alpha$-H\"{o}lder continuous and known near the boundary, we propose a proximal reconstruction method for $\sigma_{x,f}$ which algebraically depends on
the internal data. The error of this reconstruction can be made arbitrarily small with a proper choice of the source, and the stability is of Lipschitz type. The key idea is to use an isotropic source that is localized around a set of points on the boundary. Under this illumination, on a line connecting two bright points on the boundary,  only the ballistic part of $u(\bx, \bv)$ contributes to the leading order term of the internal data, whereas the scattering parts of $u(\bx, \bv)$ yield lower order terms. It is an analogue of the highly collimated source function in~\cite{chung2017inverse}, where angularly resolved illuminations and measurements are allowed. 

At last, we make a few comments on some relevant inverse transport problems. In the absence of acoustic modulation, inverse problems for the time independent RTE with angularly averaged measurements and illuminations are mostly open~\cite{bal2009inverse, zhao2018instability}.
The equation considered in this setting is the first equation in \eqref{EQ:SRTE}, where the sum $\sigma_{x,a} + \sigma_{x,s} + \sigma_{x,f}$ is denoted by $\sigma_{x,tf}$~\cite{bal2009inverse}. When $\sigma_{x,tf}$ is unknown or $\sigma_{x,s}$ is unknown, there is no uniqueness result for the reconstruction of $\sigma_{x,tf}$ or $\sigma_{x,s}$.
When only $\sigma_{x,s}$ is unknown and $\sigma_{x,tf}$ and $\sigma_{x,s}$ are small, recovering $\sigma_{x,s}$ is severely unstable~\cite{bal2008inverse}. 
In the presence of acoustic modulation, inverse problems with the time independent RTE with angularly resolved measurements are studied in~\cite{chung2017inverse,bal2016ultrasound}.


The rest of the paper is organized as follows. In Section~\ref{SEC:INTERNAL}, we extract some internal data from the measurements. A few general properties of the inverse problem are established in Section~\ref{SEC:PROP}. In Section~\ref{SEC:SIGMA}, we reconstruct $\sigma_{x,f}$ from the internal data at the excitation stage. We give results on the uniqueness and stability of $\sigma_{x,f}$ for the linearized problem, and provide an algebraic reconstruction formula for the nonlinear problem. In Section~\ref{SEC:ETA}, assuming $\sigma_{x,f}$ has been successfully reconstructed, we recover $\eta$ from the internal data at the emission stage. The numerical experiments on synthetic data are presented in Section~\ref{SEC:NUM} for validation.

\section{Internal data}\label{SEC:INTERNAL}
In analogy to~\cite{bal2016ultrasound}, we introduce the self-adjoint operators $A_{\eps}$ and $A_0$ defined by 
\begin{equation}
\begin{aligned}
A_{\eps} f = - (\sigma^{\eps}_{x,a} + \sigma^{\eps}_{x,f} + \sigma^{\eps}_{x,s}) f  + \sigma^{\eps}_{x,s} \cK f,\\
A_{0} f = - (\sigma_{x,a} + \sigma_{x,f} + \sigma_{x, s}) f  + \sigma_{x,s} \cK f;
\end{aligned}
\end{equation}
then the modulated solution $u_{\eps}$ satisfies
\begin{equation}\label{eq:A0}
(\bv\cdot \nabla -A_{\eps}) u_{\eps}(\bx, \bv) = 0.
\end{equation}
{ We then consider the auxiliary function $\mathfrak{U}(\bx, \bv) :=  u(\bx, - \bv)$, which satisfies
the adjoint radiative transfer equation
\begin{equation}\label{eq:A1}
(-\bv\cdot \nabla - A_0) \mathfrak{U}(\bx, \bv) = 0
\end{equation}
with $\mathfrak{U}(\bx, \bv) = g(\bx)$ on the outgoing boundary $\Gamma_{+}$.} The quantity
\begin{equation}
\cJ \mathfrak{U}(\bx) := \int_{\bbS^{d-1}} \bv \cdot \bn_{\bx} \mathfrak{U}(\bx, \bv) d\bv = \int_{\bbS^{d-1}} \bv \cdot \bn_{\bx} u(\bx, -\bv) d\bv = - \int_{\bbS^{d-1}} \bv \cdot \bn_{\bx} u(\bx,\bv) d\bv = - \cJ u(\bx) 
\end{equation}
is available from the measurements. 
Computing $(A_{\eps} - A_0) f$ with the modulated coefficients \eqref{EQ:MODCOEF}, we find that
\begin{equation}
(A_{\eps} - A_0)f = \eps \cos(\bx\cdot \bq + \phi) \left(-( \sigma_{x,a} +  \sigma_{x,f} +  \sigma_{x,s})f + \sigma_{x,s}\cK f\right).
\end{equation}
Multiplying the equations~\eqref{eq:A0} and~\eqref{eq:A1} by $\mathfrak{U}_0$ and $u_{\eps}$ respectively, we obtain
\begin{equation}
\int_{\bbS^{d-1}}\int_{\Omega} \left((A_{\eps} - A_0) u_{\eps} \right) \mathfrak{U}_0 d\bx d\bv = \int_{\bbS^{d-1}} \int_{\Omega} \bv\cdot \nabla (u_{\eps} \mathfrak{U}) d\bx d\bv= \int_{\bbS^{d-1}} \int_{\partial \Omega} \bn_{\bx}\cdot \bv u_{\eps} \mathfrak{U}  d\bs d\bv.
\end{equation}
Since the boundary illumination is isotropic, the right-hand side is equal to
\begin{equation}
\int_{\Gamma_{-}} \bn_{\bx}\cdot \bv g(\bx) \mathfrak{U} d\bs d\bv + \int_{\Gamma_{+}}  \bn_{\bx}\cdot \bv u_{\eps} g(\bx) d\bs d\bv = \int_{\partial\Omega} \left( \cJ \mathfrak{U} + \cJ u_{\eps}\right)g(\bx)  d\bs(\bx).
\end{equation}
The right-hand side is known from the measurements by noticing that $ \mathfrak{U}(\bx, \bv) = u(\bx, - \bv)$. When $\eps$ is sufficiently small, {{we write the solution $u_{\eps}$ in an asymptotic expansion
\begin{equation}\label{EQ:ASY}
u_{\eps} = u + \eps u_1 + \eps^2 u_2 + \cdots.
\end{equation}
}}Then the following quantity is known up to higher order terms in $\eps$,
\begin{equation}
\int_{\bbS^{d-1}} \int_{\Omega} \cos(\bx\cdot \bq + \phi) \left(-( \sigma_{x,a} +  \sigma_{x,f} +  \sigma_{x,s})u + \sigma_{x,s}\cK u \right) \mathfrak{U} d\bx d\bv.
\end{equation}
Varying $\bq$ and $\phi$ and performing the inverse Fourier transform, we obtain the internal data $H(\bx)$ for the excitation stage
\begin{equation}\label{EQ:H}
\begin{aligned}
H(\bx) &= \int_{\bbS^{d-1}} \left(-( \sigma_{x,a} + \sigma_{x,f} + \sigma_{x,s})u +  \sigma_{x,s}\cK u \right)\mathfrak{U} d\bv \\
&=  - \sigma_{x,tf}   \int_{\bbS^{d-1}} u(\bx, \bv)\mathfrak{U}(\bx, \bv) d\bv  +  \sigma_{x,s} \int_{\bbS^{d-1}} \cK u(\bx, \bv) \mathfrak{U}(\bx, \bv) d\bv \\
&=  - \sigma_{x,tf}   \int_{\bbS^{d-1}} u(\bx, \bv) u(\bx, -\bv) d\bv  +  \sigma_{x,s} \int_{\bbS^{d-1}} \cK u(\bx, \bv) u(\bx, -\bv) d\bv,
\end{aligned}
\end{equation}
 { where $\sigma_{x,tf} := \sigma_{x,t} + \sigma_{x,f}$ and $\sigma_{x,t} := \sigma_{x,a} + \sigma_{x,s}$ denote the total absorption coefficient at the excitation wavelength with and without the fluorescence.} Similarly, to compute the internal data at the emission stage, we define auxiliary functions $\mathfrak{W}$ and $\varphi$ by the equations
\begin{equation}\label{EQ:W}
\begin{aligned}
-\bv\cdot \nabla \mathfrak{W}(\bx, \bv) + (\sigma_{m,a} + \sigma_{m,s})\mathfrak{W}(\bx, \bv) &= \sigma_{m,s} \cK\mathfrak{W}(\bx, \bv)  &\text{ in }&X,\\
\mathfrak{W}(\bx, \bv) &= h(\bx) &\text{ on }&\Gamma_{+},
\end{aligned}
\end{equation}
for some strictly positive function $h(\bx)\in L^{\infty}(\partial\Omega)$,  and 
\begin{equation}\label{EQ:vphi}
\begin{aligned}
-\bv\cdot \nabla \varphi(\bx, \bv) + \sigma_{x,tf} \varphi(\bx, \bv) &= \sigma_{x,s} \cK\varphi(\bx, \bv) + \eta \sigma_{x,f}\cI \mathfrak{W}(\bx) &\text{ in }&X,\\
\varphi(\bx, \bv) &= 0 &\text{ on }& \Gamma_{+}. 
\end{aligned}
\end{equation}
{
Multipling \eqref{EQ:W} by $w_{\eps}$ and \eqref{EQ:MODSRTE} by $\mathfrak{W}$, we obtain 
\begin{equation}\label{EQ:etaint1}
\begin{aligned}
&\quad \eps \int_{\Omega} \cos(\bx\cdot \bq + \phi)  \left(-(\sigma_{m,a} + \sigma_{m,s})\int_{\bbS^{d-1}} w(\bx, \bv) \mathfrak{W}(\bx, \bv) d\bv + \sigma_{m,s}\int_{\bbS^{d-1}} \cK w(\bx, \bv) \mathfrak{W}(\bx, \bv) d\bv \right)d\bx\\
&=\int_{\bbS^{d-1}} \int_{\Omega} \bv\cdot \nabla (w_{\eps} \mathfrak{W}) d\bx d\bv   -   \int_{\Omega} \eta \sigma^{\eps}_{x,f}(\cI u)(\cI \mathfrak{W}) d\bx\\
&= \int_{\bbS^{d-1}} \int_{\partial \Omega} \bn_{\bx}\cdot \bv w_{\eps} \mathfrak{W}  d\bs d\bv  -   \int_{\Omega} \eta \sigma^{\eps}_{x,f}(\cI u)(\cI \mathfrak{W}) d\bx.
\end{aligned}
\end{equation}
Similarly, for \eqref{EQ:vphi} and \eqref{EQ:MODSRTE}, we obtain up to higher orders in $\eps$,
\begin{equation}\label{EQ:etaint2}
\begin{aligned}
&\quad \eps \int_{\Omega} \cos(\bx\cdot \bq + \phi)  \left( -\sigma_{x,tf}\int_{\bbS^{d-1}} u(\bx, \bv) \varphi(\bx, \bv) d\bv + \sigma_{x,s}\int_{\bbS^{d-1}} \cK u(\bx, \bv) \varphi(\bx, \bv) d\bv \right)d\bx\\
&= \int_{\bbS^{d-1}} \int_{\Omega} \bv\cdot \nabla (u_{\eps}  \varphi) d\bx d\bv + \int_{\Omega} \eta \sigma_{x,f}(\cI u)(\cI \mathfrak{W}) d\bx\\
&= \int_{\bbS^{d-1}} \int_{\partial \Omega} \bn_{\bx}\cdot \bv u_{\eps}  \varphi d\bs d\bv + \int_{\Omega} \eta \sigma_{x,f}(\cI u)(\cI \mathfrak{W}) d\bx\\
& \approx \int_{\bbS^{d-1}} \int_{\partial \Omega} \bn_{\bx}\cdot \bv u \varphi d\bs d\bv + \eps \int_{\bbS^{d-1}} \int_{\partial \Omega} \bn_{\bx}\cdot \bv u_1  \varphi d\bs d\bv + \int_{\Omega} \eta \sigma_{x,f}(\cI u)(\cI \mathfrak{W}) d\bx.
\end{aligned}
\end{equation}
The sum of \eqref{EQ:etaint1} and \eqref{EQ:etaint2} gives 
\begin{equation}\label{EQ:etaint3}
\begin{aligned}
 &\quad \int_{\bbS^{d-1}} \int_{\partial \Omega} \bn_{\bx}\cdot \bv w_{\eps} \mathfrak{W}  d\bs d\bv  +  \int_{\bbS^{d-1}} \int_{\partial \Omega} \bn_{\bx}\cdot \bv u \varphi d\bs d\bv + \eps \int_{\bbS^{d-1}} \int_{\partial \Omega} \bn_{\bx}\cdot \bv u_1  \varphi d\bs d\bv\\
&\approx \eps \int_{\Omega} \cos(\bx\cdot \bq + \phi)  \Big(-(\sigma_{m,a} + \sigma_{m,s})\int_{\bbS^{d-1}} w(\bx, \bv) \mathfrak{W}(\bx, \bv) d\bv + \sigma_{m,s}\int_{\bbS^{d-1}} \cK w(\bx, \bv) \mathfrak{W}(\bx, \bv) d\bv \\
&\quad+\eta \sigma_{x,f}(\cI u)(\cI \mathfrak{W}) -\sigma_{x,tf}\int_{\bbS^{d-1}} u(\bx, \bv) \varphi(\bx, \bv) d\bv + \sigma_{x,s}\int_{\bbS^{d-1}} \cK u(\bx, \bv) \varphi(\bx, \bv) d\bv
 \Big)d\bx.
\end{aligned}
\end{equation}
The first term on left-hand side in \eqref{EQ:etaint3} is known from the measurements because
\begin{equation}
 \int_{\bbS^{d-1}} \int_{\partial \Omega} \bn_{\bx}\cdot \bv w_{\eps} \mathfrak{W}  d\bs d\bv = \int_{\Gamma_{+}} \bn_{\bx}\cdot \bv w_{\eps} h(\bx) d\bs d\bv =  \int_{\partial \Omega} \cJ w_{\eps}  h(\bx) d\bs.
\end{equation}
The second term on left-hand side is known from the boundary conditions. The third term is bounded by the Cauchy-Schwartz inequality
\begin{equation}
\left|\int_{\bbS^{d-1}} \int_{\partial \Omega} \bn_{\bx}\cdot \bv u_{1}  \varphi d\bs d\bv\right|\le \left(\int_{\bbS^{d-1}} \int_{\partial \Omega} |\bn_{\bx}\cdot \bv| |u_{1}|^2 d\bs\bv \right)^{1/2} \left(\int_{\bbS^{d-1}} \int_{\partial \Omega} |\bn_{\bx}\cdot \bv| |\varphi|^2 d\bs\bv \right)^{1/2},
\end{equation}
and from Lemma 2.2 in~\cite{agoshkov2012boundary},
\begin{equation}
\begin{aligned}
\int_{\bbS^{d-1}} \int_{\partial \Omega} |\bn_{\bx}\cdot \bv| |\varphi|^2 d\bs\bv = \int_{\Gamma_{-}} |\bn_{\bx}\cdot \bv| |\varphi|^2 d\bs\bv &\le c\left( \int_{\Gamma_{+}} |\bn_{\bx}\cdot \bv| |\varphi|^2 d\bs\bv + \|\eta\sigma_{x,f}\cI\mathfrak{W}\|^2_{L^2(\Omega)} \right)\\
&\le c \|\eta\sigma_{x,f}\cI\mathfrak{W}\|^2_{L^2(\Omega)},
\end{aligned}
\end{equation}
where the constant $c$ depends on $\Omega$ only. Experimentally, $\eta$ and $\sigma_{x,f}$ are usually spatially localized functions concentrated on the target cells such that $\|\eta\sigma_{x,f}\cI\mathfrak{W}\|_{L^2(\Omega)}\ll 1$, hence we omit this term from~\eqref{EQ:etaint3}.
}
Therefore, the internal data $S$ at the emission stage is
\begin{equation}\label{EQ:S}
\begin{aligned}
S(\bx) &= -(\sigma_{m,a} + \sigma_{m,s})\int_{\bbS^{d-1}} w(\bx, \bv) \mathfrak{W}(\bx, \bv) d\bv + \sigma_{m,s}\int_{\bbS^{d-1}} \cK w(\bx, \bv) \mathfrak{W}(\bx, \bv) d\bv \\
&\quad+\eta \sigma_{x,f}(\cI u)(\cI \mathfrak{W}) -\sigma_{x,tf}\int_{\bbS^{d-1}} u(\bx, \bv) \varphi(\bx, \bv) d\bv + \sigma_{x,s}\int_{\bbS^{d-1}} \cK u(\bx, \bv) \varphi(\bx, \bv) d\bv.
\end{aligned}
\end{equation}

{Under the assumption that  $\eps$ is sufficiently small, the internal data $H(\bx)$ and $S(\bx)$ given by \eqref{EQ:H} and \eqref{EQ:S} for all $\bx\in \Omega$ are available.}
In the diffusion regime, it is easy to check the above internal data $H$ and $S$ will be simplified to the internal data in~\cite{li2018hybrid}.
In the following, we will recover the unknown coefficients $(\sigma_{x, f}, \eta)$ from the internal data $(H,S)$ simultaneously. Since the coupling between $\sigma_{x,f}$ and $\eta$ is weak, we take a two-step reconstruction process, i.e., first reconstruct $\sigma_{x,f}$ from $H$ and then use the recovered coefficient to reconstruct the quantum efficiency $\eta$ as in~\cite{li2018hybrid}.

\section{General properties of the inverse problems}\label{SEC:PROP}
In this section, we derive some general properties of the inverse problems of reconstructing $\sigma_{x,f}$ from the internal data $H$ and reconstructing $\eta$ from $S$ in the transport equations~\eqref{EQ:SRTE}. For any $1\leq p \leq \infty$, let  $L^p(X)$ (resp. $L^p(\Omega)$) denote the Lebesgue space of real-valued functions whose $p$-th power are Lebesgue integrable over $X$ (resp. $\Omega$), and $\cH_p^1(X)$ the space of $L^p(X)$ functions whose directional derivative along $\bv$ belongs to $L^p(X)$ as well, i.e.,  $\cH_p^1(X) := \{ f(\bx, \bv) : f\in L^p(X) \text{ and } \bv\cdot \nabla f \in L^p(X) \}$. We also let $L^p(\Gamma_{-})$ denote the space of functions that are the traces of $\cH_p^1(X)$ functions on $\Gamma_{-}$ under the norm $\|f\|_{L^p(\Gamma_{-})} := (\int_{\Gamma_{-}} |\bn(\bx) \cdot \bv| |f|^p d\bv d\bs )^{1/p}$, where $d\bs$ is the surface measure on $\partial\Omega$.  We make the flowing assumptions
\begin{enumerate}
	\item[($\mathfrak{A}$1).] The domain $\Omega$ is \emph{convex}  and simply connected, and $\partial\Omega$ is $C^2$.
	\item[($\mathfrak{A}$2).] The optical coefficients $\sigma_{x,a},  \sigma_{x,s}, \sigma_{m,a}, \sigma_{m, s}$ are bounded by some constants $\sfc_1$ and $\sfc_2$, with
	\begin{equation}
	0 < 	\sfc_1 < \sigma_{x,a}, \sigma_{x,s}, \sigma_{m,a}, \sigma_{m, s} < \sfc_2 < \infty.
	\end{equation} 
	The unknown coefficients $\sigma_{x,f}, \eta$ belong to the admissible sets $\cA_{\sigma}$ and $\cA_{\eta}$, respectively, where
	\begin{equation}
	\begin{aligned}
	&\cA_{\sigma} := \{ \sigma_{x,f}: 0 < \sfc_3 \le \sigma_{x,f} \le \sfc_4 < \infty \}, \\
	&\cA_{\eta} := \{ \eta:  0 \le \sfc_5 \le \eta \le  \sfc_6 < 1  \},
	\end{aligned}
	\end{equation}
	for some constants $\sfc_3, \sfc_4, \sfc_5$ and $\sfc_6$. 
	\item[($\mathfrak{A}$3).] The source function $g(\bx)$ is strictly positive, that is, there exists a constant $\sfc_7$ such that $0 < \sfc_7 \le g(\bx)$ for $\bx \in\partial\Omega$.
	\item[($\mathfrak{A}$4).] The scattering phase function $p(\bv\cdot \bv')$ 	is strictly positive and uniformly bounded and satisfies
	\begin{equation}
	\int_{\bbS^{d-1} }p(\bv\cdot \bv') d\bv' = 1,\quad 0 < \sfc_8 < p(\bv\cdot \bv') < \sfc_9 <\infty 
	\end{equation}
	for some constants $\sfc_8$ and $\sfc_9$.
\end{enumerate}
The above assumptions permit unique solutions $u(\bx, \bv),w(\bx,\bv) \in \cH^1_p(X)$ to RTE~\eqref{EQ:SRTE} for any given function $g(\bx) \in L^p(\partial\Omega)$ from the standard transport theory in ~\cite{agoshkov2012boundary}. Therefore the internal data $H$ and $S$ are well-defined for any $g(\bx)\in L^p(\partial\Omega)$ that satisfies the above assumptions. In the following, we show that $H$ and $S$ continuously depend on the unknown coefficients $\sigma_{x,f}$ and $\eta$ respectively.

\begin{theorem}\label{THM:FRECHET H}
	For any $g(\bx)\in L^p(\partial\Omega)$, suppose the assumptions ($\mathfrak{A}$1-$\mathfrak{A}$4) hold, then the { operator $\opH: L^{\infty}(\Omega)\rightarrow L^{p/2}(\Omega)$, which maps $\sigma_{x,f}$ to the internal data $H$,}  
	is Fr\'echet differentiable at any $\sigma_{x,f}\in\cA_{\sigma}$ in the direction $\delta\sigma_{x,f}\in L^{\infty}(\Omega)$ such that $\sigma_{x,f} + \delta\sigma_{x,f}\in \cA_{\sigma}$. The derivative is given by 
	\begin{equation}\label{EQ:HP}
	\begin{aligned}
	{\opH}'[\sigma_{x,f}](\delta\sigma_{x,f}) &= -\delta\sigma_{x,f} \int_{\bbS^{d-1}} u(\bx, \bv) u(\bx, -\bv) d\bv  -2 \sigma_{x,tf} \int_{\bbS^{d-1}} v(\bx, \bv) u(\bx,-\bv)  d\bv \\&\quad + 2\sigma_{x,s} \int_{\bbS^{d-1}} \cK v(\bx, \bv) u(\bx, -\bv) d\bv,
	\end{aligned}
	\end{equation}
	where $v(\bx, \bv)$ satisfies
	\begin{equation}\label{EQ:V}
	\begin{aligned}
	\bv\cdot \nabla v(\bx, \bv) + \sigma_{x,tf} v(\bx, \bv) &= \sigma_{x,s}\cK v(\bx,\bv) - \delta\sigma_{x,f} u &\text{ in }& X,\\
	v(\bx, \bv) &= 0\quad &\text{ on }&\Gamma_{-}.
	\end{aligned}
	\end{equation}
\end{theorem}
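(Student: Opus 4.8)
The plan is to linearize the map $\sigma_{x,f}\mapsto H$ by differentiating the internal data formula \eqref{EQ:H} together with the forward RTE \eqref{EQ:SRTE}, and to verify that the remainder is $o(\|\delta\sigma_{x,f}\|_{L^\infty})$ in the $L^{p/2}(\Omega)$ norm. First I would set $\sigma_{x,f}^\tau := \sigma_{x,f} + \tau\,\delta\sigma_{x,f}$ and let $u^\tau$ denote the corresponding excitation solution of \eqref{EQ:SRTE}; under ($\mathfrak{A}$1--$\mathfrak{A}$4) the standard transport theory of \cite{agoshkov2012boundary} gives a unique $u^\tau\in\cH^1_p(X)$, with a uniform bound $\|u^\tau\|_{\cH^1_p(X)}\le C\|g\|_{L^p(\partial\Omega)}$ on the admissible set $\cA_\sigma$, since all coefficients stay in fixed ranges. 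Writing $u^\tau = u + \tau v + r_\tau$, the difference $u^\tau-u$ solves a transport equation whose source is $-(\sigma_{x,f}^\tau-\sigma_{x,f})u^\tau = -\tau\,\delta\sigma_{x,f}\,u^\tau$; subtracting the equation \eqref{EQ:V} for $v$ shows that $r_\tau$ solves
\begin{equation*}
\bv\cdot\nabla r_\tau + \sigma_{x,tf}\,r_\tau - \sigma_{x,s}\cK r_\tau = -\tau\,\delta\sigma_{x,f}\,(u^\tau - u), \qquad r_\tau|_{\Gamma_-}=0,
\end{equation*}
hence by the a priori estimate $\|r_\tau\|_{\cH^1_p(X)}\le C\tau\|\delta\sigma_{x,f}\|_{L^\infty}\|u^\tau-u\|_{L^p(X)}\le C\tau^2\|\delta\sigma_{x,f}\|^2_{L^\infty}\|g\|_{L^p(\partial\Omega)}$. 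In particular $\|u^\tau-u\|_{\cH^1_p(X)} = O(\tau\|\delta\sigma_{x,f}\|_{L^\infty})$ and $v$ depends boundedly and linearly on $\delta\sigma_{x,f}$.

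Next I would substitute $\sigma_{x,f}^\tau = \sigma_{x,f}+\tau\delta\sigma_{x,f}$ and $u^\tau = u+\tau v+r_\tau$ into \eqref{EQ:H}, recalling $\sigma_{x,tf}=\sigma_{x,t}+\sigma_{x,f}$ so that the perturbation enters both through the explicit $\sigma_{x,tf}$ factor and through $u^\tau(\bx,\pm\bv)$. Expanding the three bilinear/quadratic terms in $\tau$ and collecting the $O(\tau)$ part gives exactly the right-hand side of \eqref{EQ:HP}: the term $-\delta\sigma_{x,f}\int u(\bx,\bv)u(\bx,-\bv)\,d\bv$ comes from differentiating the coefficient $\sigma_{x,tf}$ in the first integral, while the two terms with $v$ come from differentiating $u(\bx,\bv)$ in the first and third integrals, using the symmetry $\bv\mapsto-\bv$ to combine $\int v(\bx,\bv)u(\bx,-\bv)+\int v(\bx,-\bv)u(\bx,\bv) = 2\int v(\bx,\bv)u(\bx,-\bv)$ (and likewise for the $\cK$ term, using that $p$ depends only on $\bv\cdot\bv'$). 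The remainder $\opH(\sigma_{x,f}^\tau) - \opH(\sigma_{x,f}) - \tau\,\opH'[\sigma_{x,f}](\delta\sigma_{x,f})$ is then a finite sum of angular integrals each of which contains at least one factor of $r_\tau$ or is quadratic in $(\tau v, \tau r_\tau)$, e.g.\ terms like $\sigma_{x,tf}\int r_\tau(\bx,\bv)u(\bx,-\bv)\,d\bv$, $\tau^2\sigma_{x,tf}\int v(\bx,\bv)v(\bx,-\bv)\,d\bv$, $\tau\,\delta\sigma_{x,f}\int v(\bx,\bv)u(\bx,-\bv)\,d\bv$, and so on.

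To close the argument I estimate each such remainder term in $L^{p/2}(\Omega)$ by Cauchy--Schwarz (or Hölder) in the angular variable followed by the generalized Hölder inequality $\|fg\|_{L^{p/2}(\Omega)}\le\|f\|_{L^p(\Omega)}\|g\|_{L^p(\Omega)}$, together with the bounds $\|u(\cdot,\pm\bv)\|_{L^p}$, $\|v\|_{L^p}\lesssim\|\delta\sigma_{x,f}\|_{L^\infty}$, $\|r_\tau\|_{L^p}\lesssim\tau^2\|\delta\sigma_{x,f}\|^2_{L^\infty}$ established above, and the $L^\infty$ bounds on $\sigma_{x,tf},\sigma_{x,s}$ from ($\mathfrak{A}$2),($\mathfrak{A}$4). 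This yields $\|\opH(\sigma_{x,f}^\tau) - \opH(\sigma_{x,f}) - \tau\,\opH'[\sigma_{x,f}](\delta\sigma_{x,f})\|_{L^{p/2}(\Omega)} = O(\tau^2\|\delta\sigma_{x,f}\|^2_{L^\infty})$, which is precisely Fréchet differentiability with derivative \eqref{EQ:HP}; linearity and boundedness of $\delta\sigma_{x,f}\mapsto\opH'[\sigma_{x,f}](\delta\sigma_{x,f})$ follow from the corresponding properties of $\delta\sigma_{x,f}\mapsto v$. I expect the main technical obstacle to be bookkeeping: correctly tracking that the perturbation of $\sigma_{x,f}$ appears simultaneously in the explicit coefficients of \eqref{EQ:H}, inside $\sigma_{x,tf}$ in the equation \eqref{EQ:V} for $v$, and in both angular arguments $u(\bx,\bv)$ and $u(\bx,-\bv)$, and making sure the symmetrization $\bv\to-\bv$ is applied consistently so that the factor $2$ in \eqref{EQ:HP} emerges; the functional-analytic estimates themselves are routine given the $\cH^1_p$ well-posedness theory quoted from \cite{agoshkov2012boundary}.
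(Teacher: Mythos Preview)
Your proposal is correct and follows essentially the same route as the paper: establish Fr\'echet differentiability of $\sigma_{x,f}\mapsto u$ by writing the equation for the second-order remainder $r_\tau$ (the paper's $u'' = u'-v$), bound it via the standard $L^p$ transport estimates from \cite{agoshkov2012boundary}, and then differentiate the bilinear expression for $H$. The paper is terser --- it simply invokes ``the product rule'' after proving $\|u''\|_{L^p}\le C\|\delta\sigma_{x,f}\|_{L^\infty}^2\|g\|_{L^p}$ --- whereas you spell out the $\bv\mapsto-\bv$ symmetrization producing the factor $2$ and the H\"older step into $L^{p/2}(\Omega)$, but the underlying argument is the same.
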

\begin{proof}
	Let $\tilde{\sigma}_{x,f} = \sigma_{x,f} + \delta \sigma_{x,f}$, $\tilde{u}$ be the solution to the first equation in~\eqref{EQ:SRTE} with coefficient $\tilde{\sigma}_{x,f}$, and $\tilde{H}$ be the corresponding internal data. Then $u' := \tilde{u}- u$ solves the transport equation
	\begin{equation}\label{EQ:u'}
	\begin{aligned}
	\bv \cdot \nabla u'(\bx, \bv) + \sigma_{x, tf} u'(\bx, \bv) &= \sigma_{x,s} \cK u'(\bx, \bv) - \delta\sigma_{x,f} \tilde{u}&\text{ in }& X,\\
	u'(\bx, \bv) &= 0\quad &\text{ on }&\Gamma_{-}.
	\end{aligned}
	\end{equation}
	Denote the difference between $v$ and the true perturbation $u'$ by $u'' := u' - v$.  We have that $u''$ satisfies the transport equation
	\begin{equation}\label{EQ:u''}
	\begin{aligned}
	\bv \cdot \nabla u''(\bx, \bv) + \sigma_{x, tf} u''(\bx, \bv) &= \sigma_{x,s} \cK u''(\bx, \bv) - \delta\sigma_{x,f} u'&\text{ in }& X,\\
	u''(\bx, \bv) &= 0\quad &\text{ on }&\Gamma_{-}.
	\end{aligned}
	\end{equation}	
	We now show that $\|u''\|_{L^p(X)}$ is of order $\|\delta\sigma_{x,f}\|^2_{L^{\infty}(\Omega)}$ using the standard theory of transport equations~\cite{agoshkov2012boundary}.
	The source term $\delta\sigma_{x,f}u'$ in \eqref{EQ:u'} is in $L^p(X)$, therefore $u'\in \cH_p^1(X)$ and there exist constants $\mathfrak{c}_1$ and $\mathfrak{c}_2$ such that
	\begin{equation}
	\|u'\|_{L^p(X)} \le \mathfrak{c}_1 \|\delta \sigma_{x,f} \tilde{u}\|_{L^p(X)}\le\mathfrak{c}_2\|\delta\sigma_{x,f}\|_{L^{\infty}(\Omega)} \|g\|_{L^p(\partial\Omega)}.
	\end{equation}
	It follows that the source term $\delta\sigma_{x,f}u'$ in \eqref{EQ:u''} lies in $L^p(X)$,  thus 
	\begin{equation}
	\|u''\|_{L^p(X)} \le \mathfrak{c}_1 \|\delta \sigma_{x,f} u'\|_{L^p(X)}\le \mathfrak{c}_1  \mathfrak{c}_2\|\delta\sigma_{x,f}\|^2_{L^{\infty}(\Omega)} \|g\|_{L^p(\partial\Omega)}.
	\end{equation}
	Hence $u$ is Fr\'echet differentiable with respect to $\sigma_{x,f}$ as a map from $L^{\infty}(\Omega)$ to $L^p(X)$. By the product rule, the Fr\'echet derivative of ${\opH}$ with respect to $\sigma_{x,f}$ is
	\begin{equation}
	\begin{aligned}
	{\opH}'[\sigma_{x,f}](\delta\sigma_{x,f}) &= -\delta\sigma_{x,f} \int_{\bbS^{d-1}} u(\bx, \bv) u(\bx, -\bv) d\bv  -2 \sigma_{x,tf} \int_{\bbS^{d-1}} v(\bx, \bv) u(\bx,-\bv)  d\bv \\&\quad + 2\sigma_{x,s} \int_{\bbS^{d-1}} \cK v(\bx, \bv) u(\bx, -\bv) d\bv.
	\end{aligned}
	\end{equation}
\end{proof}

\begin{theorem}\label{THM:FREDHOLM H}
	For any $g(\bx)\in L^{\infty}(\partial\Omega)$, suppose the assumptions ($\mathfrak{A}$1-$\mathfrak{A}$4) hold.  Then the Fr\'echet derivative ${\opH}'[\sigma_{x,f}]:L^2(\Omega)\rightarrow L^2(\Omega)$ is Fredholm.
\end{theorem}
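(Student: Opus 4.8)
The plan is to write $\opH'[\sigma_{x,f}]$ explicitly as a sum of a multiplication operator and compositions of bounded solution operators with angular averaging, and then identify which pieces are compact on $L^2(\Omega)$ and which is the identity (or an invertible multiplication) up to compact perturbation. Recall from Theorem~\ref{THM:FRECHET H} that
\begin{equation*}
\opH'[\sigma_{x,f}](\delta\sigma_{x,f}) = -\delta\sigma_{x,f}\,\Phi(\bx) - 2\sigma_{x,tf}\int_{\bbS^{d-1}} v(\bx,\bv) u(\bx,-\bv)\,d\bv + 2\sigma_{x,s}\int_{\bbS^{d-1}} \cK v(\bx,\bv) u(\bx,-\bv)\,d\bv,
\end{equation*}
where $\Phi(\bx) := \int_{\bbS^{d-1}} u(\bx,\bv)u(\bx,-\bv)\,d\bv$ and $v$ solves the transport equation~\eqref{EQ:V} with source $-\delta\sigma_{x,f}\,u$. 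The first term is pointwise multiplication by $-\Phi$; under ($\mathfrak{A}$2--$\mathfrak{A}$4) and the maximum principle for the transport equation, $u$ is bounded above and strictly bounded below by positive constants on $X$ (for $g$ strictly positive), so $\Phi \in L^\infty(\Omega)$ is strictly positive and bounded away from zero. Hence multiplication by $-\Phi$ is an invertible (in particular Fredholm of index zero) operator on $L^2(\Omega)$.

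The heart of the argument is to show the remaining two terms define a \emph{compact} operator $K: L^2(\Omega)\to L^2(\Omega)$, so that $\opH'[\sigma_{x,f}] = -\Phi\cdot(\,\cdot\,) + K$ is Fredholm. First I would factor the map $\delta\sigma_{x,f}\mapsto v$ as: multiplication $\delta\sigma_{x,f}\mapsto -\delta\sigma_{x,f}u \in L^2(X)$, followed by the solution operator $T$ of~\eqref{EQ:V} mapping $L^2(X)\to L^2(X)$ (bounded by standard transport theory~\cite{agoshkov2012boundary}), followed by the two angular-integration maps $v\mapsto \int_{\bbS^{d-1}} v(\bx,\bv)u(\bx,-\bv)d\bv$ and $v\mapsto \int_{\bbS^{d-1}} \cK v(\bx,\bv)u(\bx,-\bv)d\bv$ into $L^2(\Omega)$. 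Each of these individual maps is bounded, so $K$ is at least bounded. To upgrade to compactness, the key structural fact is that the scattering/averaging operator $\cK$ together with the velocity-averaging inherent in the transport solution operator produces the needed gain: the composition $f\mapsto T(fu)$ followed by $\cK$ and angular integration is a classical instance where averaging lemmas / the compactness of $\sigma_{x,s}\cK T$ on $L^2$ apply (this is the same mechanism that makes $I - \sigma_{x,s}\cK A^{-1}$ Fredholm in stationary transport theory). Concretely, I would invoke the known compactness of the operator $g\mapsto \cK u$ type maps from $L^2$ to $L^2$ under ($\mathfrak{A}$1) convexity and $C^2$ boundary, or decompose $v = v_{\mathrm{ball}} + v_{\mathrm{sc}}$ into ballistic and (multiply-)scattered parts and show that the ballistic contribution, after being hit by $\cK$ or paired against $u(\bx,-\bv)$ and integrated in $\bv$, gains regularity in $\bx$ (it is an average over lines through $\bx$ of an $L^2$ density), hence lands in a compactly embedded space, while the scattered parts are handled by iterating and using that $\sigma_{x,s}\cK$ composed with the transport resolvent is compact.

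I would then assemble: $\opH'[\sigma_{x,f}] = M_{-\Phi} + K$ with $M_{-\Phi}$ boundedly invertible on $L^2(\Omega)$ and $K$ compact, whence $\opH'[\sigma_{x,f}]$ is a compact perturbation of an invertible operator and therefore Fredholm (of index zero), which is the claim. The main obstacle I anticipate is the compactness of $K$: bounded­ness of every constituent map is routine, but extracting an honest compactness gain requires care, since angular averaging alone does not smooth in $\bx$ — one must use the transport dynamics (the $\bv\cdot\nabla$ term, convexity of $\Omega$, and the $C^2$ boundary) to turn averaging in $\bv$ into a velocity-averaging estimate yielding fractional Sobolev regularity in $\bx$, or alternatively appeal directly to the established compactness of the relevant scattering operators in the Case/Zweifel and Agoshkov framework~\cite{agoshkov2012boundary}. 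A secondary technical point is ensuring all the solution operators and the bilinear angular pairings against $u(\bx,-\bv)$ are well-defined and bounded on $L^2$ given only $g \in L^\infty(\partial\Omega)$, which follows from the $L^\infty$ bounds on $u$ together with the $L^2$ transport estimates.
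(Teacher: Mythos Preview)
Your overall architecture --- write $\opH'[\sigma_{x,f}]$ as multiplication by the strictly positive $-\Phi$ plus a compact remainder $K$, and conclude Fredholm of index zero --- is exactly the paper's strategy. The difference lies in how compactness of $K$ is obtained. The paper's argument is much more direct than what you outline: it simply observes that the products $v(\bx,\bv)u(\bx,-\bv)$ and $\cK v(\bx,\bv)u(\bx,-\bv)$ lie in $\cH_2^1(X)$ (since $u\in L^\infty(X)$ and $v,\cK v, u(\cdot,-\bv)\in\cH_2^1(X)$), and then applies the velocity averaging lemma \cite{devore2001averaging,diperna1991lp} directly to these products to conclude that their angular integrals land in $W^{2,1/2}(\Omega)$, which embeds compactly into $L^2(\Omega)$. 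Your proposal gets there but takes a detour: you consider ballistic/scattered decompositions of $v$, iterate compactness of $\sigma_{x,s}\cK T$, and only at the end mention velocity averaging as one option among several. None of that machinery is needed --- the averaging lemma applied to the \emph{product} with $u(\bx,-\bv)$ as weight handles both terms in one stroke, with no need to split $v$ or iterate. What your discussion buys is perhaps a more explicit picture of why the gain occurs, but the paper's route is the efficient one to write down.
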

\begin{proof}
	From the assumptions ($\mathfrak{A}$1-$\mathfrak{A}$4), the solution $u(\bx, \bv)$ is strictly positive, thus there exists a constant $\hat{\sfc} > 0$ such that
	 $$\int_{\bbS^{d-1}} u(\bx, \bv) u(\bx, -\bv) d\bv  > \hat{\sfc},\quad  \forall \bx\in \Omega.$$
	On the other hand, since $v(\bx, \bv) u(\bx, -\bv) \in \cH_2^1(X)$ and $\cK v(\bx, \bv) u(\bx, -\bv) \in \cH_2^1(X)$,
	by the averaging lemma~\cite{devore2001averaging,diperna1991lp}, we obtain 
	\begin{equation}
	\int_{\bbS^{d-1}} v(\bx, \bv) u(\bx,-\bv)  d\bv\in W^{2,1/2}(\Omega),\quad \int_{\bbS^{d-1}} \cK v(\bx, \bv) u(\bx, -\bv) d\bv \in W^{2,1/2}(\Omega).
	\end{equation}
	Then by the fact that the embedding from $W^{2, 1/2}(\Omega)$ to $L^2(\Omega)$ is compact, we obtain that ${\opH}'[\sigma_{x,f}]: L^2(\Omega)\rightarrow L^2(\Omega)$ is Fredholm.
\end{proof}
\begin{theorem}\label{THM:FREDHOLM S}
	For any $g(\bx)\in L^{\infty}(\Omega)$, suppose the assumptions ($\mathfrak{A}$1-$\mathfrak{A}$4) hold and $\sigma_{x,f}$ is known. {Then the linear operator $\cS : L^{2}(\Omega)\rightarrow L^{2}(\Omega)$, which maps $\eta$ to the internal data $S$, is Fredholm.}
	 
\end{theorem}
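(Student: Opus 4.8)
The plan is to mimic the argument of Theorem~\ref{THM:FREDHOLM H}: show that the internal data $S$, viewed as a function of $\eta$, decomposes into the sum of a pointwise multiplication operator that is bounded below and a collection of terms that gain regularity via the velocity-averaging lemma and hence are compact on $L^2(\Omega)$. First I would observe that, with $\sigma_{x,f}$ fixed, the solutions $u$, $w$, $\mathfrak{W}$ of the direct and adjoint transport problems \eqref{EQ:SRTE}, \eqref{EQ:W} depend linearly on $\eta$: indeed $u$ and $\mathfrak{W}$ are independent of $\eta$ altogether, $w$ depends linearly on $\eta$ through the source $\eta\sigma_{x,f}\cI u$, and $\varphi$ depends linearly on $\eta$ through the source $\eta\sigma_{x,f}\cI\mathfrak{W}$ in \eqref{EQ:vphi}. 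Therefore $S$ in \eqref{EQ:S} is indeed linear in $\eta$, so $\cS$ is a well-defined linear operator, and its mapping properties $L^2(\Omega)\to L^2(\Omega)$ follow from the standard $\cH^1_p(X)$ estimates of \cite{agoshkov2012boundary} applied to $w$ and $\varphi$ (for the $L^2$ target one uses $p=2$, and the data $g\in L^\infty(\partial\Omega)$ keeps $u,\mathfrak{W}$ bounded).

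Next I would isolate the ``good'' term. Among the five terms of \eqref{EQ:S}, the term $\eta\,\sigma_{x,f}\,(\cI u)(\cI\mathfrak{W})$ is a pure multiplication operator: $\eta\mapsto m(\bx)\,\eta(\bx)$ with $m(\bx)=\sigma_{x,f}(\bx)\,(\cI u)(\bx)\,(\cI\mathfrak{W})(\bx)$. Under ($\mathfrak{A}$1--$\mathfrak{A}$4) both $u$ and $\mathfrak{W}$ are strictly positive on $X$ — $u$ because $g\ge\sfc_7>0$ and $\mathfrak{W}$ because $h$ is strictly positive — so $\cI u$ and $\cI\mathfrak{W}$ are bounded below by a positive constant on $\Omega$, and together with $\sigma_{x,f}\ge\sfc_3>0$ this gives $m(\bx)\ge\hat\sfc>0$. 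Hence multiplication by $m$ is an invertible (in particular Fredholm) operator on $L^2(\Omega)$.

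The remaining four terms are angular averages of products of the form $w\,\mathfrak{W}$, $\cK w\,\mathfrak{W}$, $u\,\varphi$, $\cK u\,\varphi$. Here is the key step, and the main obstacle: one must argue that each such angular average lies in a space compactly embedded in $L^2(\Omega)$. As in Theorem~\ref{THM:FREDHOLM H}, each factor lies in $\cH^1_2(X)$ (this needs $u,w,\mathfrak{W},\varphi\in\cH^1_2(X)$, which holds since the transport sources are in $L^2$; for the $\cK$ terms one uses that $\cK$ maps $\cH^1_2$ into $\cH^1_2$ under ($\mathfrak{A}$4) because $\cK$ commutes with $\bv\cdot\nabla$ only after the average — more carefully, $\cK f$ is $\bx$-dependent only, so $\bv\cdot\nabla(\cK f)=0\in L^2$, while $\cK f\in L^2$, giving $\cK f\in\cH^1_2$ trivially), so the products lie in $\cH^1_1(X)$ by Cauchy--Schwarz, and the velocity-averaging lemma \cite{devore2001averaging,diperna1991lp} places their angular averages in $W^{2,1/2}(\Omega)$ (I am using the excerpt's notation $W^{2,1/2}(\Omega)$ for the fractional Sobolev space $H^{1/2}$, i.e.\ $W^{s,p}$ with $s=1/2$, $p=2$ — consistent with Theorem~\ref{THM:FREDHOLM H}). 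Since $W^{2,1/2}(\Omega)\hookrightarrow L^2(\Omega)$ compactly, each of these four contributions defines a compact operator $L^2(\Omega)\to L^2(\Omega)$ in $\eta$ (using the linear dependence of $w$ and $\varphi$ on $\eta$). Consequently $\cS = (\text{invertible multiplication}) + (\text{compact})$, which is Fredholm, completing the proof. The one point requiring care is checking that the sources entering $w$ and $\varphi$ — namely $\eta\sigma_{x,f}\cI u$ and $\eta\sigma_{x,f}\cI\mathfrak{W}$ — are genuinely in $L^2(X)$ uniformly, so that the regularity chain $\cH^1_2 \Rightarrow \cH^1_1$-product $\Rightarrow$ velocity averaging applies; this follows from the boundedness of $\eta,\sigma_{x,f}$ and of $\cI u,\cI\mathfrak{W}$ established above.
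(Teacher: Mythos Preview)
Your proposal is correct and follows essentially the same route as the paper: linearity of $w,\varphi$ in $\eta$, strict positivity of the multiplication weight $\sigma_{x,f}(\cI u)(\cI\mathfrak W)$, velocity averaging on the four remaining angular integrals to land in $W^{2,1/2}(\Omega)$, and compact embedding into $L^2(\Omega)$.

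One correction to a parenthetical: your claim that ``$\cK f$ is $\bx$-dependent only, so $\bv\cdot\nabla(\cK f)=0$'' is false unless the scattering is isotropic, since $\cK f(\bx,\bv)=\int_{\bbS^{d-1}} p(\bv\cdot\bv')f(\bx,\bv')\,d\bv'$ depends on $\bv$ through $p$. The paper simply asserts that $\cK w\,\mathfrak W$ and $\cK u\,\varphi$ lie in $\cH_2^1(X)$; the honest justification uses that one factor in each product ($\mathfrak W$ or $u$) is in $L^\infty(X)$ with $\bv\cdot\nabla$ also in $L^\infty(X)$ (read off from its transport equation with bounded data), so the product with the $L^2$/$\cH^1_2$ factor stays in $\cH^1_2$. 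Alternatively, use the self-adjointness of $\cK$ to rewrite $\int_{\bbS^{d-1}}\cK w\,\mathfrak W\,d\bv=\int_{\bbS^{d-1}} w\,\cK\mathfrak W\,d\bv$ and argue with $\cK\mathfrak W\in L^\infty$. Your Cauchy--Schwarz route to $\cH^1_1$ is weaker than needed, since the $L^1$ averaging lemma is more delicate than the $L^2$ version the paper cites; the $L^\infty$-times-$L^2$ observation gives $\cH^1_2$ directly.
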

\begin{proof}
	Since $\sigma_{x,f}$ is known, $w(\bx, \bv)$ and $\varphi(\bx, \bv)$ are linear in $\eta$, hence ${\cS}$ is a linear functional of $\eta$. Since the auxiliary function $h(\bx)$ in~\eqref{EQ:W} is strictly positive, $\cI\mathfrak{W}$ is strictly positive over $\Omega$. Thus $\sigma_{x,f}(\cI u)(\cI \mathfrak{W})$ is strictly positive. On the other hand, since $w(\bx, \bv)\mathfrak{W}(\bx, \bv)\in \cH_2^1(X)$ and $u(\bx, \bv)\varphi(\bx, \bv)\in \cH_2^1(X)$, by the averaging lemma~\cite{devore2001averaging,diperna1991lp}, we have
	\begin{equation}
	\int_{\bbS^{d-1}} w(\bx, \bv) \mathfrak{W}(\bx, \bv) d\bv \in W^{2,1/2}(\Omega),\quad\int_{\bbS^{d-1}}u(\bx, \bv)\varphi(\bx, \bv) d\bv \in W^{2,1/2}(\Omega).
	\end{equation}
	Similarly, it is easy to verify that $\mathcal{K}w(\bx, \bv) \mathfrak{W}(\bx, \bv)\in \cH_2^1(X)$ and $\mathcal{K}u(\bx, \bv) \psi(\bx, \bv)\in \cH_2^1(X)$ as well, hence 
	\begin{equation}
	\int_{\bbS^{d-1}} \cK w(\bx, \bv) \mathfrak{W}(\bx, \bv) d\bv \in W^{2,1/2}(\Omega),\quad\int_{\bbS^{d-1}}\cK u(\bx, \bv)\varphi(\bx, \bv) d\bv \in W^{2,1/2}(\Omega).
	\end{equation}
	By the compactness of the embedding from $W^{2,1/2}(\Omega)$ into $L^2(\Omega)$, we obtain that ${\cS}: L^2(\Omega)\rightarrow L^2(\Omega)$ is Fredholm.
\end{proof}

\section{Reconstruction of $\sigma_{x,f}$}\label{SEC:SIGMA}
In this section, we consider the reconstruction of the coefficient $\sigma_{x,f}$ from the internal data $H$ in two scenarios.
We first show that 
the linearized inverse problem permits a unique reconstruction when the medium is optically thin and the scattering is weak. Then propose a proximal reconstruction for the nonlinear problem, which allows an arbitrary accuracy when $\sigma_{x,f}$ is H\"{o}lder continuous and is known near the boundary. Note that without these further assumptions on the medium parameters, this nonlinear inverse medium problem may not have a unique reconstruction, that is, two different $\sigma_{x,f}$'s may give the same $H$ (see Section~\ref{SEC:NUM}). 

\subsection{Uniqueness and stability for linearized problem}\label{LOCAL}
We have the following theorem for the linearized problem. 
\begin{theorem}\label{THM:LUNIQ}
	Let $g(\bx)\in L^{\infty}(\partial\Omega)$, and suppose that the assumptions~($\mathfrak{A}$1-$\mathfrak{A}$4) hold. Let the following conditions be satisfied:
	\begin{enumerate}
		\item The medium is optically thin, i.e., there exists a small constant $1 > \gamma > 0$ such that 
		\begin{equation}\label{EQ:COND1}
		\exp(\ell_{\Omega}\sup_{\bx\in\Omega}\sigma_{x,tf}(\bx) ) < 1+\gamma \;\text{ with }\; \ell_{\Omega} = \text{diam}(\Omega)
		\end{equation}
		\item The scattering is weak,  i.e., there exists a small constant $1 > \delta > 0$ such that
		\begin{equation}\label{EQ:COND2}
		\sup_{\bx\in\Omega}\frac{\sigma_{x,s}}{\sigma_{x,tf}} < \delta
		\end{equation}
		\item The constants $\gamma$ and $\delta$ satisfy
		\begin{equation}\label{EQ:COND3}
		(1+\delta)(1 + 2\mu^2(1+\gamma)^2) < \frac{1+2\gamma}{\gamma}\;\text{ with }\;\mu = \sup_{\bx\in\partial\Omega} g(\bx) / \inf_{\bx\in\partial\Omega} g(\bx).
		\end{equation}
	\end{enumerate}
	Then the linear equation 
	${\opH}'[\sigma_{x,f}] \delta\sigma_{x,f} = 0$
	only permits the zero solution.
\end{theorem}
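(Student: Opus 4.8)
The plan is to show that the Fréchet derivative operator ${\opH}'[\sigma_{x,f}]$, viewed as an operator on $L^2(\Omega)$, is injective by deriving an a priori estimate: if ${\opH}'[\sigma_{x,f}]\delta\sigma_{x,f}=0$ pointwise, then $\|\delta\sigma_{x,f}\|$ is bounded by a strictly smaller multiple of itself, forcing $\delta\sigma_{x,f}\equiv 0$. The three structural terms in~\eqref{EQ:HP} are: a ``diagonal'' term $-\delta\sigma_{x,f}\int u(\bx,\bv)u(\bx,-\bv)\,d\bv$, and two ``scattering/transport'' terms built from the auxiliary solution $v$ of~\eqref{EQ:V}. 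The strategy is that under the optically-thin and weak-scattering hypotheses the diagonal term dominates: its coefficient $\int u(\bx,\bv)u(\bx,-\bv)\,d\bv$ is bounded below (as in the proof of Theorem~\ref{THM:FREDHOLM H}), while the two $v$-dependent terms are controlled by $\|v\|$, which in turn is small compared with $\|\delta\sigma_{x,f}\|$ because $v$ solves a transport equation whose source is $-\delta\sigma_{x,f}u$ and whose ``gain'' (the scattering operator $\sigma_{x,s}\cK$ together with the attenuation $e^{-\int\sigma_{x,tf}}$) is quantitatively close to the pure-absorption semigroup when $\gamma$ and $\delta$ are small.

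First I would set up sharp pointwise bounds on $u$ and $\mathfrak{U}=u(\cdot,-\cdot)$. Writing $u$ via the integral (Duhamel) formulation along characteristics, the ballistic part is $g$ attenuated by $e^{-\int_0^{\tau}\sigma_{x,tf}}$, which under condition~\eqref{EQ:COND1} lies in $[\,(1+\gamma)^{-1}\inf g,\ \sup g\,]$; the scattering contributions add at most a factor governed by $\delta$ and $\gamma$, via a Neumann-series bound $\sum_k(\delta\cdot(\text{something}\le 1+\gamma))^k$. This gives two-sided bounds $c_-\le u(\bx,\bv)\le c_+$ with $c_\pm$ explicit in $\inf g,\sup g,\gamma,\delta$; in particular $\int_{\bbS^{d-1}}u(\bx,\bv)u(\bx,-\bv)\,d\bv\ge |\bbS^{d-1}|\,c_-^2$, and the parameter $\mu=\sup g/\inf g$ enters precisely because $c_+/c_-$ is controlled by $\mu(1+\gamma)$. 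Next I would estimate $v$: from~\eqref{EQ:V}, $v=-(\,T^{-1}\,)(\delta\sigma_{x,f}u)$ where $T=\bv\cdot\nabla+\sigma_{x,tf}-\sigma_{x,s}\cK$ with zero inflow data, and the standard transport resolvent bound (the constants $\mathfrak c_1,\mathfrak c_2$ of Theorem~\ref{THM:FRECHET H}, made quantitative using~\eqref{EQ:COND1}--\eqref{EQ:COND2}) yields $\|v\|_{L^\infty}\le C(\gamma,\delta)\|\delta\sigma_{x,f}\|_{L^\infty}\|g\|_{L^\infty}$ with $C(\gamma,\delta)\to 0$ as $\gamma,\delta\to 0$ — more precisely the relevant small factor is of order $\gamma$ (the optical thickness) times a factor $(1-\delta)^{-1}$ from summing the scattering Neumann series.

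Then I would combine: from ${\opH}'[\sigma_{x,f}]\delta\sigma_{x,f}=0$ solve for the diagonal term and estimate
$$
|\delta\sigma_{x,f}(\bx)|\,\Big(\textstyle\int_{\bbS^{d-1}}u(\bx,\bv)u(\bx,-\bv)\,d\bv\Big)\le 2\sigma_{x,tf}\Big|\textstyle\int v\,u(\cdot,-\cdot)\Big|+2\sigma_{x,s}\Big|\textstyle\int \cK v\,u(\cdot,-\cdot)\Big|,
$$
bound the right side by (const)$\cdot\sigma_{x,tf}(1+\delta)\,c_+\,\|v\|_{L^\infty}$ and the left side below by $|\bbS^{d-1}|c_-^2|\delta\sigma_{x,f}(\bx)|$, then insert the bound on $\|v\|_{L^\infty}$ and take the sup over $\bx$. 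The result is an inequality of the shape $\|\delta\sigma_{x,f}\|_{L^\infty}\le \Theta(\gamma,\delta,\mu)\,\|\delta\sigma_{x,f}\|_{L^\infty}$, and the point of hypothesis~\eqref{EQ:COND3} is exactly that it is the clean sufficient condition guaranteeing $\Theta<1$ after the $c_\pm$, the Neumann-series sums, and the factor $2\mu^2(1+\gamma)^2$ are tracked; hence $\delta\sigma_{x,f}=0$ in $L^\infty(\Omega)$, and since the admissible perturbations are dense this gives injectivity on $L^2(\Omega)$ as well (combined with Theorem~\ref{THM:FREDHOLM H}, this upgrades to a stability estimate).

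\medskip

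\noindent\emph{Main obstacle.} The technical heart is making the transport resolvent and the $u$-bounds \emph{quantitatively} sharp enough that the assembled constant $\Theta$ really is dominated by the explicit algebraic quantity $(1+\delta)(1+2\mu^2(1+\gamma)^2)\big/\tfrac{1+2\gamma}{\gamma}$ appearing in~\eqref{EQ:COND3} — i.e.\ bookkeeping the exact dependence on $\gamma,\delta,\mu$ in every Neumann series and every characteristics integral, rather than hiding it in an unspecified ``$C$''. A secondary subtlety is handling the scattering term $2\sigma_{x,s}\int\cK v\,u(\cdot,-\cdot)$: one must use $\|\cK v\|_{L^\infty}\le\|v\|_{L^\infty}$ (from $\int p\,d\bv'=1$) and the ratio bound~\eqref{EQ:COND2} so that this term is genuinely $O(\delta)$ relative to the transport term, not just $O(1)$.
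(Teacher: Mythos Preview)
Your approach is correct and reaches the same conclusion, but it is organized differently from the paper's argument. The paper does not keep the two-step cycle $\delta\sigma_{x,f}\to v\to\delta\sigma_{x,f}$; instead it \emph{eliminates} $\delta\sigma_{x,f}$ at the outset. From $\opH'[\sigma_{x,f}](\delta\sigma_{x,f})=0$ one reads off $\delta\sigma_{x,f}\,u$ as a linear functional of $v$, namely $\delta\sigma_{x,f}\,u=-2\sigma_{x,tf}\cK_1v+2\sigma_{x,s}\cK_1\cK v$ with $\cK_1 f(\bx,\bv)=\psi(\bx)^{-1}u(\bx,\bv)\int_{\bbS^{d-1}}u(\bx,-\bv')f(\bx,\bv')\,d\bv'$, and substitutes this into~\eqref{EQ:V} to obtain a single homogeneous transport equation $Lv=Tv$ with $L=\sigma_{x,tf}^{-1}(\bv\cdot\nabla+\sigma_{x,tf})$ and $T=\sigma_{x,tf}^{-1}(\sigma_{x,s}\cK+2\sigma_{x,tf}\cK_1-2\sigma_{x,s}\cK_1\cK)$. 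The contraction $\|L^{-1}T\|_{L^p_\sigma}<1$ is then verified using the Vladimirov bound $\|L^{-1}\|_{L^p_\sigma}\le 1-e^{-\ell_\Omega\sup\sigma_{x,tf}}\le\gamma/(1+\gamma)$ together with $\|\cK_1\|\le\mu^2(1+\gamma)^2$ from the two-sided bounds on $u$; this is exactly how~\eqref{EQ:COND3} arises. Once $v\equiv0$, positivity of $u$ forces $\delta\sigma_{x,f}=0$.

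What each approach buys: the paper's single fixed-point formulation is cleaner---one operator norm to estimate, and working in the weighted space $L^p_\sigma$ (for all $p$) makes the $\sigma_{x,tf}$ factors cancel automatically. Your two-step bound is more elementary and avoids introducing the auxiliary kernel $\cK_1$, but you should be careful with the weighting: the natural quantity to close the self-bounding loop is $\sigma_{x,tf}^{-1}\delta\sigma_{x,f}$ (or equivalently work in $L^p_\sigma$), otherwise a spurious factor $\sup\sigma_{x,tf}/\inf\sigma_{x,tf}$ leaks into the constant. If you track this, your $\Theta$ reduces to $2\mu^2(1+\gamma)^2(1+\delta)\gamma/(1+\gamma-\delta\gamma)$, and $\Theta<1$ is algebraically equivalent to~\eqref{EQ:COND3}. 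So the two routes are not just qualitatively similar---they yield the identical threshold.
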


\begin{proof}
	It follows from~\eqref{EQ:V} that
	\begin{equation}
	-\delta\sigma_{x,f} u = \bv \cdot \nabla v(\bx, \bv) + \sigma_{x,tf} v(\bx, \bv) - \sigma_{x,s} \cK v(\bx, \bv).
	\end{equation}
	Substituting the above equation into~\eqref{EQ:HP}, we obtain that when 
	${\opH}'[\sigma_{x,f}](\delta\sigma_{x,f}) = 0$, $v(\bx, \bv)$ satisfies
	\begin{equation}\label{EQ:PRTE}
	\begin{aligned}
	\bv \cdot \nabla v(\bx, \bv) + \sigma_{x,tf} v(\bx, \bv)&=  \sigma_{x,s} \cK v(\bx, \bv) + 2\sigma_{x,tf}{\cK_1}v - 2\sigma_{x,s}{\cK_1}\cK v &\text{ in }&X,\\
	v(\bx, \bv) &= 0\quad&\text{ on }&\Gamma_{-},
	\end{aligned}
	\end{equation}
	where the map ${\cK_1}$ is defined as 
	\begin{equation}
	\begin{aligned}
	{\cK_1} f(\bx, \bv) = \frac{1}{\psi} \int_{\bbS^{d-1}} u(\bx,\bv) u(\bx, -\bv') f(\bx, \bv') d\bv' \;\text{ with }\;
	\psi(\bx) = \int_{\bbS^{d-1}} u(\bx, \bv) u(\bx, -\bv) d\bv.
	\end{aligned}
	\end{equation}
	Define the operators $L$ and ${T}$ by
	\begin{equation*}
	L:= \sigma_{x,tf}^{-1}\left(\bv \cdot \nabla + \sigma_{x,tf}\right),\;\text{ and }\; {T}:=  \sigma_{x,tf}^{-1}\left(\sigma_{x,s} \cK  + 2\sigma_{x,tf}{\cK_1} - 2\sigma_{x,s}{\cK_1}\cK \right).
	\end{equation*}
	Let the space $L_{\sigma}^p(X)$ be the space of functions with the norm
	\begin{equation*}
	\|f(\bx, \bv)\|_{\sigma} := \int_{X} \sigma_{x,tf}(\bx) |f(\bx, \bv)|^p d\bx d\bv.
	\end{equation*}
	By Lemma 4.1 in~\cite{vladimirov1963mathematical}, $\|L^{-1}\|_{L^p_{\sigma}(X)}\le (1 - \exp(-\ell_{\Omega}\sup_{\bx\in\Omega}\sigma_{x,tf}(\bx) ))$. We also have the following estimate for $u(\bx, \bv)$ from the maximum principle and semigroup theory, 
	\begin{equation*}
	\exp\left( -\ell_{\Omega}\sup_{\bx\in\Omega}\sigma_{x,tf}(\bx)\right)\inf_{\bx\in\partial\Omega} g(\bx)\le u(\bx, \bv) \le \sup_{\bx\in\partial\Omega} g(\bx).
	\end{equation*}
	Thus $\|{T}\|_{L^p_{\sigma}(X)}$ is bounded by
	\begin{equation*}
	\|{T}\|_{L^p_{\sigma}(X)} \le \sup_{\bx\in\Omega} \frac{\sigma_{x,s}}{\sigma_{x,tf}} + 2\mu^2 \exp\left( 2\ell_{\Omega} \sup_{\bx\in\Omega}\sigma_{x,tf}(\bx)\right) + 2\mu^2 \exp\left( 2\ell_{\Omega} \sup_{\bx\in\Omega}\sigma_{x,tf}(\bx)\right)\cdot \sup_{\bx\in\Omega} \frac{\sigma_{x,s}}{\sigma_{x,tf}}.
	\end{equation*}
	From the given conditions~\eqref{EQ:COND1},~\eqref{EQ:COND2} and~\eqref{EQ:COND3}, we deduce that $\|L^{-1}\|_{L^p_{\sigma}(X)}\le \frac{\gamma}{1+\gamma}$ and $\|{T}\|_{L^p_{\sigma}} \le \delta + 2\mu^2 (1+\gamma)^2(1+\delta)$. Thus
	\begin{equation}\label{EQ:LS}
	\begin{aligned}
	\|L^{-1} {T}\|_{L^p_{\sigma}(X)} &\le \|L^{-1}\|_{L^p_{\sigma}(X)} \|{T}\|_{L^p_{\sigma}(X)}\\
	&\le \frac{\gamma}{1+\gamma} \left( \delta + 2\mu^2 (1+\gamma)^2(1+\delta)\right)\\
	&= \frac{\gamma}{1+\gamma} \left( (1+ 2\mu^2 (1+\gamma)^2)(1+\delta) - 1\right) < 1.
	\end{aligned}
	\end{equation}
	Therefore $L v = {T}v$ only permits $v = 0$ in $L^p_{\sigma}(X)$, and the proof is completed by noticing $L^p_{\sigma}(X)$ is the same set as $L^p(X)$.
\end{proof}  
The above local uniqueness result could be interpreted by considering the limiting case. When the scattering coefficient $\sigma_{x,s}\to 0$, the internal data $H\to -\sigma_{x,tf}\int_{\bbS^{d-1}} u(\bx, \bv) u(\bx, -\bv) d\bv$. If the medium is optically thin or $\text{diam}(\Omega)\ll 1$, then the solution $u(\bx, \bv)$ could be well approximated by ignoring the coefficient $\sigma_{x,f}$ in~\eqref{EQ:SRTE}, thus $\sigma_{x,f}$ and $\sigma_{x,tf}$ are decoupled and can be recovered directly. 

The following stability estimate follows immediately from the classical stability theory of Fredholm operators~\cite{kato2013perturbation}. 
\begin{theorem}\label{THM:LOCAL STAB}
	Let $\mathfrak{H}$ and $\tilde{\mathfrak{H}}$ be two perturbed internal data defined in~\eqref{EQ:HP}, and $\delta \sigma_{x,f}$ and $\delta \tilde{\sigma}_{x,f}$ be the solutions to the linearized equations
	\begin{equation}
	{\opH}'[\sigma_{x,f}]\delta\sigma_{x,f} = \mathfrak{H}\;\text{ and }\; 	{\opH}'[\sigma_{x,f}]\delta\tilde{\sigma}_{x,f} = \tilde{\mathfrak{H}},
	\end{equation}
	where $\sigma_{x,f}$ is the background coefficient. Then under the same condition as Theorem~\ref{THM:LUNIQ}, there exists a constant $\tilde{\sfc} = \tilde{\sfc}(\gamma, \delta) > 0$ such that
	\begin{equation}
	\frac{1}{\tilde{\sfc}} \|\mathfrak{H} - \tilde{\mathfrak{H}}\|_{L^2(\Omega)} \le  \|\delta\sigma_{x,f} -  \delta\tilde{\sigma}_{x,f}\|_{L^{2}(\Omega)} \le \tilde{\sfc} \|\mathfrak{H} - \tilde{\mathfrak{H}}\|_{L^2(\Omega)}.
	\end{equation} 
\end{theorem}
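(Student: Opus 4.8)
The plan is to reduce the entire statement to the single assertion that, under the hypotheses of Theorem~\ref{THM:LUNIQ}, the bounded linear operator ${\opH}'[\sigma_{x,f}]\colon L^2(\Omega)\to L^2(\Omega)$ of Theorem~\ref{THM:FREDHOLM H} is a Banach space isomorphism. Granting this, the two-sided estimate drops out by linearity: subtracting the two linearized equations gives ${\opH}'[\sigma_{x,f}]\bigl(\delta\sigma_{x,f}-\delta\tilde{\sigma}_{x,f}\bigr)=\mathfrak{H}-\tilde{\mathfrak{H}}$, so taking $\tilde{\sfc}:=\max\bigl\{\|{\opH}'[\sigma_{x,f}]\|_{L^2\to L^2},\,\|{\opH}'[\sigma_{x,f}]^{-1}\|_{L^2\to L^2}\bigr\}$, the right-hand inequality follows from $\delta\sigma_{x,f}-\delta\tilde{\sigma}_{x,f}={\opH}'[\sigma_{x,f}]^{-1}(\mathfrak{H}-\tilde{\mathfrak{H}})$ and the left-hand inequality from $\|\mathfrak{H}-\tilde{\mathfrak{H}}\|_{L^2(\Omega)}\le\|{\opH}'[\sigma_{x,f}]\|_{L^2\to L^2}\,\|\delta\sigma_{x,f}-\delta\tilde{\sigma}_{x,f}\|_{L^2(\Omega)}$. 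Note that only the right-hand inequality uses the extra smallness conditions; the left-hand one is just boundedness of ${\opH}'[\sigma_{x,f}]$ and holds under ($\mathfrak{A}$1--$\mathfrak{A}$4) alone.

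To prove the isomorphism I would proceed in three short steps. First, I would record that ${\opH}'[\sigma_{x,f}]$ is bounded on $L^2(\Omega)$ (already implicit in Theorem~\ref{THM:FREDHOLM H}): in \eqref{EQ:HP} the first term is multiplication by $\psi\in L^\infty(\Omega)$, while for the other two terms the a priori bound from the proof of Theorem~\ref{THM:FRECHET H} gives $\|v\|_{L^2(X)}\lesssim\|\delta\sigma_{x,f}\|_{L^2(\Omega)}$ for the solution $v$ of \eqref{EQ:V}, after which a Cauchy--Schwarz estimate over $\bbS^{d-1}$ controls the angular averages $\int_{\bbS^{d-1}}v\,u(\bx,-\bv)\,d\bv$ and $\int_{\bbS^{d-1}}\cK v\,u(\bx,-\bv)\,d\bv$ in $L^2(\Omega)$. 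Second, I would read off from the proof of Theorem~\ref{THM:FREDHOLM H} the decomposition ${\opH}'[\sigma_{x,f}]=-M_\psi+K$, where $M_\psi$ is multiplication by $\psi$ and $K$ is compact on $L^2(\Omega)$; since $0<\hat{\sfc}\le\psi\le|\bbS^{d-1}|\bigl(\sup_{\partial\Omega}g\bigr)^2<\infty$, the operator $M_\psi$ is boundedly invertible, so ${\opH}'[\sigma_{x,f}]=-M_\psi\bigl(\mathrm{Id}-M_\psi^{-1}K\bigr)$ is Fredholm of index zero. Third, I would invoke Theorem~\ref{THM:LUNIQ}: under \eqref{EQ:COND1}--\eqref{EQ:COND3} the kernel of ${\opH}'[\sigma_{x,f}]$ is trivial, and a Fredholm operator of index zero with trivial kernel has closed and dense range, hence is onto, hence bijective; the bounded inverse theorem then makes ${\opH}'[\sigma_{x,f}]^{-1}$ bounded, which furnishes a finite $\tilde{\sfc}$. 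Its dependence on the data is through the fixed standing quantities ($\Omega$, the bounds $\sfc_i$, $g$) and, via the conditions \eqref{EQ:COND1}--\eqref{EQ:COND3} used to trivialize the kernel, through $\gamma$ and $\delta$.

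I expect the only genuinely non-automatic point to be the identification of the Fredholm index in the second step: Theorem~\ref{THM:FREDHOLM H} as stated asserts only Fredholmness, so one must observe that its proof in fact exhibits ${\opH}'[\sigma_{x,f}]$ as an invertible multiplication operator plus a compact perturbation, which is precisely what pins the index at $0$ and allows injectivity (Theorem~\ref{THM:LUNIQ}) to be upgraded to invertibility. A secondary caveat worth flagging is that the bounded inverse theorem only guarantees the existence of $\tilde{\sfc}$, not an explicit value in terms of $(\gamma,\delta)$; obtaining an effective constant would require quantifying the Neumann series for $\bigl(\mathrm{Id}-M_\psi^{-1}K\bigr)^{-1}$, or equivalently redoing the contraction estimate \eqref{EQ:LS} with a nonvanishing right-hand side, which I would carry out only if a quantitative rate were actually needed.
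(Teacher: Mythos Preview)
Your proposal is correct and is precisely the argument the paper has in mind: the paper gives no detailed proof but only records that the estimate ``follows immediately from the classical stability theory of Fredholm operators'' (citing Kato), which is exactly your three-step scheme of decomposing ${\opH}'[\sigma_{x,f}]$ as invertible multiplication plus compact (hence index zero), using Theorem~\ref{THM:LUNIQ} for injectivity, and invoking the open mapping theorem. Your closing caveat about the non-effective dependence of $\tilde{\sfc}$ on $(\gamma,\delta)$ is apt and not addressed in the paper.
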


\begin{remark}
{We point out that most biological tissues are typically strongly scattering, for which $\sup_{\bx\in\Omega}\frac{\sigma_{x,s}}{\sigma_{x,tf}}$ is close to $1$, thus the second assumption in Theorem 4.1 does not hold. For the conclusion in Theorem 4.1 to hold in this case, it requires the domain size $\ell_{\Omega}$ to be small enough. 
Alternatively, we introduce a proximal reconstruction method in Section 4.2, which requires neither $\sup_{\bx\in\Omega}\frac{\sigma_{x,s}}{\sigma_{x,tf}}\ll 1$ nor $\ell_{\Omega}$ is small.}
\end{remark}
\subsection{Proximal uniqueness and stability of nonlinear problem}\label{SEC:GLOBAL}
We now show an approach to 
 approximating the coefficient $\sigma_{x,f}$ with arbitrary accuracy when $\sigma_{x,f}$ is H\"{o}lder continuous and is known near the boundary. For preparation, we introduce the following definitions and lemma. 
\begin{definition}[$\delta$-covering]
	Let $(M, \dist)$ be a metric space. The set $V$ is a $\delta$-covering of $M$ if for every $\bx \in M$, there exists $\by\in V$ such that $\dist(\by, \bx) \le \delta$.
\end{definition}
\begin{definition}[$\delta$-packing]
	Let $(M, \dist)$ be a metric space. The set $V$ is a $\delta$-packing of $M$ if for every $\bx_1\neq\bx_2 \in V$, $\dist(\bx_1, \bx_2) > \delta$.
\end{definition}
\begin{definition}\label{DEF:SKELETON}
	Suppose $\Omega\subset \bbR^d$ is a convex domain and $\dist$ is a metric defined on $\bbR^d$. Let $V = \{{\by}_i\}_{i=1}^n$ be a vertex set with $\by_i\in\partial\Omega$, and $G = G(V)$ be the complete geometric graph formed from the vertices $V$. Denote the edge set of $G$ by $E$. For every $e\in E$, we define the $\theta$-tube of $e$ by
	\begin{equation}
	T_{\theta}(e) := \{\by \in  \bbR^{d}: \dist(\by, e) <\theta \}.
	\end{equation}
	We then define the $\theta$-skeleton of the graph $G$ by
	\begin{equation}
	G_{\theta}(V) := G(V) \setminus \bigcup_{e_1\neq e_2\in E} \left( T_{\theta}(e_1)\cap e_2\right).  
	\end{equation}
	See Fig~\ref{FIG:SKELETON} for an illustration of the formation of $G_{\theta}(V)$.
\end{definition}
\begin{figure}[!htb]
	\centering
	\includegraphics[scale=0.13]{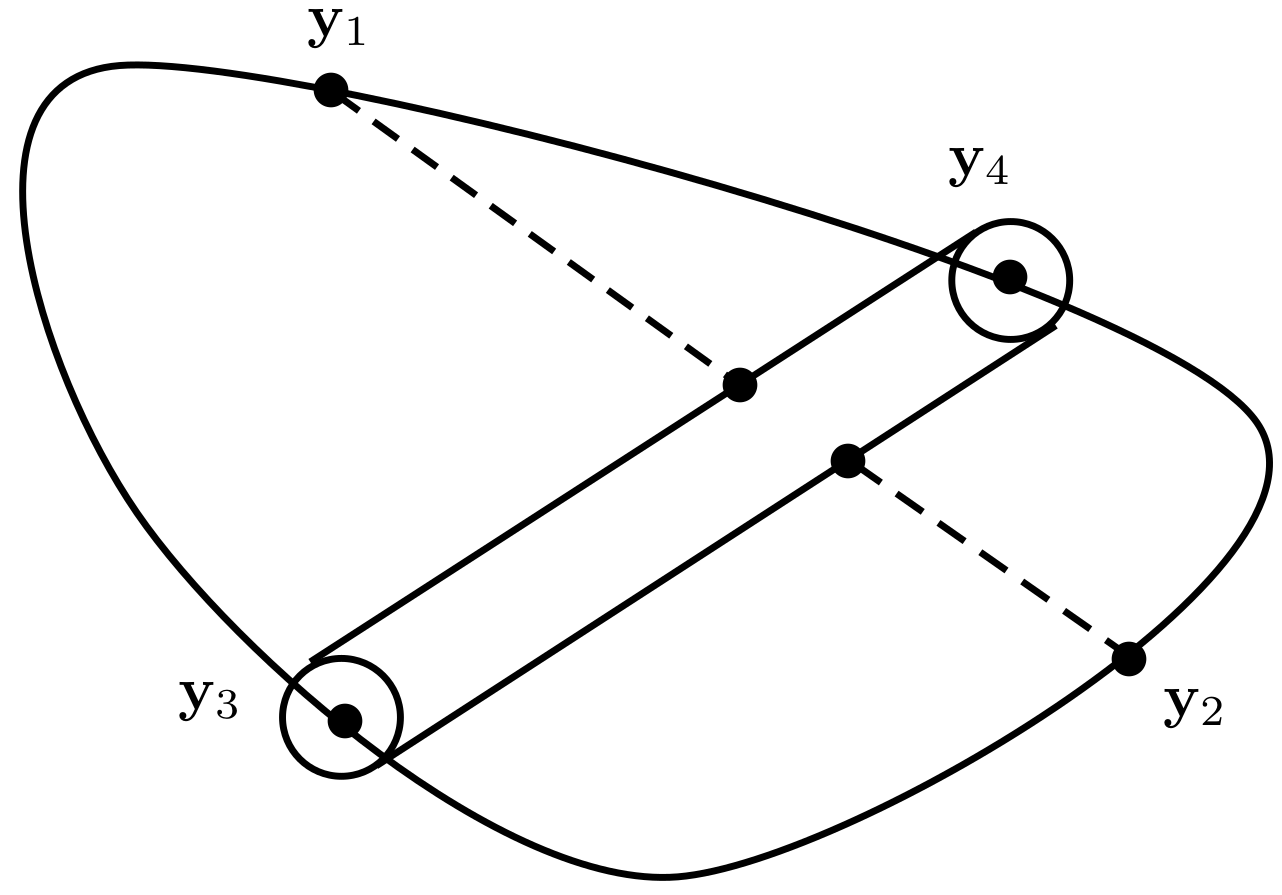}
	\caption{Illustration of the $\theta$-skeleton of a graph $G$.  The vertex set is $V = \{\by_1, \by_2, \by_3, \by_4\}$.  The intersection of the segment $\by_1\by_2$ with the $\theta$-tube of the edge $\by_3\by_4$ is removed from $G(V)$. }\label{FIG:SKELETON}
\end{figure}

\begin{lemma}\label{LEM:NET}
	Suppose $\Omega\subset \bbR^{d}$ is a unit ball and $\dist$ is the Euclidean metric defined on $\bbR^d$. Then for sufficiently small $\delta >0$, there exists a vertex set $V=\{ \by_i\}_{i=1}^n\subset \partial\Omega$ with $n = \cO(\delta^{1-d})$ such that the $\theta$-skeleton $G_{\theta}(V)$ generated by $V$ is a $2\delta$-covering of $\Omega$ for sufficiently small $\theta \le \cO(\delta^2/n^2)$, i.e., for any point $\bx \in \Omega$, there exists a point $\by \in G_{\theta}(V)$ such that $\dist(\bx, \by)\le 2\delta$.
\end{lemma}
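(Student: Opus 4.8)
The plan is to build $V$ by a greedy/maximal packing argument on the sphere, then verify the two properties: that the complete graph $G(V)$ already $2\delta$-covers $\Omega$, and that removing the thin tubes to pass to the $\theta$-skeleton $G_\theta(V)$ destroys only a negligible portion, so that the covering property survives. First I would take $V$ to be a maximal $\delta$-packing of $\partial\Omega=\bbS^{d-1}$ with respect to the Euclidean metric; by maximality it is automatically a $\delta$-covering of the sphere, and a standard volume (surface-area) comparison on $\bbS^{d-1}$ gives $n=|V|=\cO(\delta^{1-d})$. The key geometric observation is that any point $\bx$ in the unit ball lies within distance $2\delta$ of some chord $\by_i\by_j$ of $G(V)$: indeed, take the line through $\bx$ in a fixed generic direction, let it meet $\partial\Omega$ at two antipodal-ish points $\bz_1,\bz_2$; pick $\by_i,\by_j\in V$ with $\dist(\by_k,\bz_k)\le\delta$; the segment $\by_i\by_j$ then passes within $\cO(\delta)$ of $\bx$ — more precisely within $2\delta$ after the elementary estimate comparing the two chords. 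So $G(V)$ is a $2\delta$-covering of $\Omega$.

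Next I would control the damage done by excising the tubes. For each unordered pair of distinct edges $e_1=\by_a\by_b$ and $e_2=\by_c\by_d$, the removed set is $T_\theta(e_1)\cap e_2$, a (possibly empty) sub-segment of $e_2$ of length at most $2\theta/\sin\alpha$, where $\alpha$ is the angle between the lines containing $e_1$ and $e_2$; when the lines are near-parallel this can be long, but the total measure removed from any fixed edge $e_2$ is at most a sum over the $\cO(n)$ other edges, hence $\cO(n\theta/\sin\alpha_{\min})$ in the worst case. Choosing $\theta\le\cO(\delta^2/n^2)$ makes the total length removed from the skeleton $o(\delta)$ per edge. The point is then that each chord $\by_i\by_j$ still contains, after excision, a point within $2\delta$ (up to adjusting constants, say $3\delta$ but we reabsorb this by shrinking $\delta$) of the prescribed point $\bx$: since $\bx$ is within $2\delta$ of the \emph{whole} segment, and we have removed from that segment only a set of total length $o(\delta)$, there remains a surviving point of $G_\theta(V)$ within $2\delta+o(\delta)$ of $\bx$, which is $\le 2\delta$ after replacing $\delta$ by a slightly smaller comparable quantity.

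The main obstacle I anticipate is the near-parallel edge case: two chords of $\bbS^{d-1}$ can be almost collinear, so $T_\theta(e_1)$ can intersect $e_2$ in a long interval and, worse, the chord $\by_i\by_j$ that we want to use to reach $\bx$ might itself be one of these near-parallel offenders and have a large chunk removed precisely near the projection of $\bx$. To handle this I would argue that there is freedom in the choice of the approximating chord: the point $\bx$ lies on a $(d-1)$-parameter family of lines, each giving a candidate pair $(\by_i,\by_j)$, and only a lower-dimensional (in the appropriate sense, a set of directions of measure $\cO(\theta/\delta)$) subfamily can be obstructed by being near-parallel to some other edge through the relevant region; hence for generic direction the approximating chord is "transverse enough" ($\sin\alpha\ge$ some fixed constant, or $\ge\delta$) that the removed portion local to $\bx$ has length $\cO(\theta/\delta)=o(\delta)$ by the choice $\theta\le\cO(\delta^2/n^2)$. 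Quantifying this transversality-for-generic-direction claim — essentially a statement that the bad set of directions has small measure and so cannot exhaust the whole family — is the technical crux; everything else is elementary Euclidean geometry and the packing-number bound on the sphere.
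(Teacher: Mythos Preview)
Your overall strategy matches the paper's: take $V$ to be a maximal $\delta$-packing of $\partial\Omega$ (hence a $\delta$-cover with $n=\cO(\delta^{1-d})$), show that the full graph $G(V)$ already covers $\Omega$, and then argue that passing to $G_\theta(V)$ removes only a short total length from each edge. Two small corrections: (i) when you sum the removed pieces along a fixed edge you write ``$\cO(n)$ other edges,'' but the complete graph has $\binom{n}{2}=\cO(n^2)$ edges --- this is precisely why the hypothesis reads $\theta\le\cO(\delta^2/n^2)$; (ii) your covering argument for $G(V)$ moves \emph{both} endpoints to the nearest vertex and so lands at $2\delta$, whereas the paper picks the \emph{first} endpoint already in $V$ (an arbitrary $\by_i$), draws the line through $\bx$ and $\by_i$, and only approximates the second intersection point --- a similar-triangles argument then gives $\dist(\bx,e_{ij})\le\delta$. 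You need this sharper $\delta$ so that, after losing at most $\delta$ of length on the edge, you end exactly at $2\delta$ rather than having to ``reabsorb constants.''

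The substantive divergence is your treatment of the angle. You flag near-parallel edges as the main obstacle and propose a genericity argument over the $(d-1)$-parameter family of chords through $\bx$. The paper avoids this entirely: because $V$ is a $\delta$-packing of the unit sphere, every edge has length $>\delta$, and the paper uses this together with the ball geometry to assert $\sin\alpha\ge\delta/2$ for any two distinct edges. With that uniform lower bound in hand, each piece $T_\theta(e_1)\cap e_2$ has length $\le 4\theta/\delta$, and summing over at most $n^2$ edges gives total removed length $\le 4n^2\theta/\delta\le\delta$ on every edge. So no direction-varying or measure-of-bad-directions argument is needed; the packing itself rules out the near-parallel scenario. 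Your genericity route might be completable, but it is both harder to make precise and unnecessary once you exploit the $\delta$-separation of the vertices.
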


\begin{proof}
	Given any $\delta > 0$, we choose a $\delta$-packing $V$ of $\partial\Omega$ with maximal cardinality. It follows that $V$ is also a $\delta$-covering of $\partial\Omega$ and $\text{card}(V) =\cO(\delta^{1-d})$. We claim that  $G(V)$ forms a $\delta$-covering of $\Omega$. For any $\bx \in\Omega$, we pick an arbitrary point $\by_i\in V$, and denote the other intersection of $\partial\Omega$ and the line through $\by_i$ and $\bx$ by $\bx'$. Since $V$ is a $\delta$-covering of $\partial\Omega$, there exists a point $\by_j\in V$ such that $\dist(\bx', \by_j) \le \delta$. When $\by_i \neq \by_j$, we have that $\dist(\bx, e_{ij}) \le \dist(\bx', \by_j)\le \delta$, where $e_{ij}$ is the edge connecting the vertices $\by_i$ and $\by_j$ (see Fig~\ref{FIG:LEM SKELETON}). The claim is obviously true when $\by_i \neq \by_j$.
	\begin{figure}[!htb]
		\centering
		\includegraphics[scale=0.18]{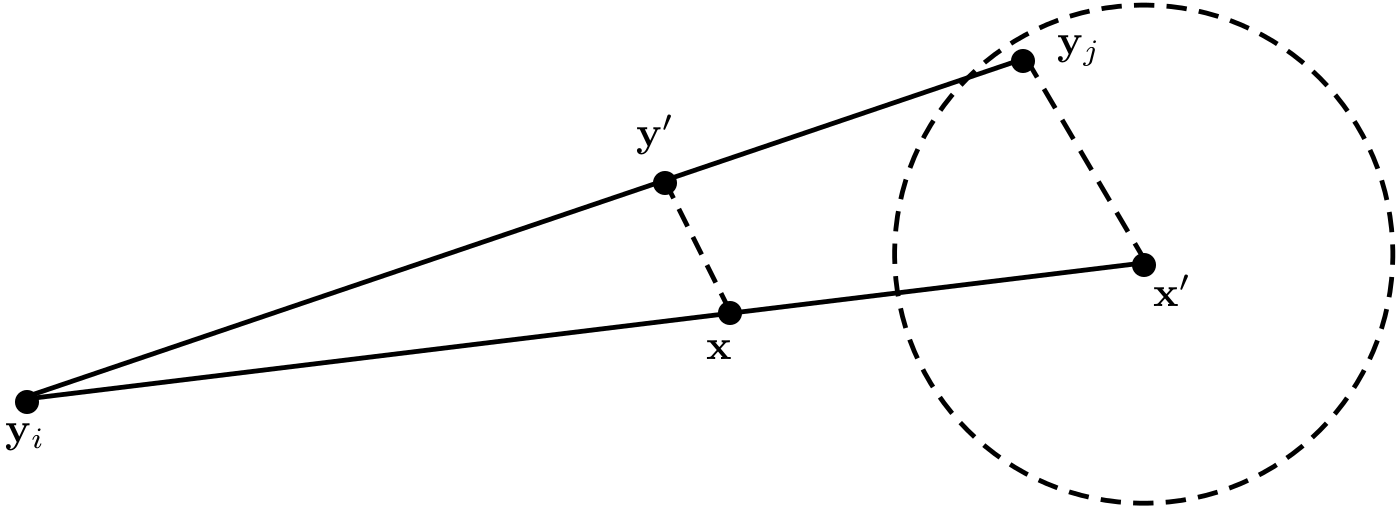}
		\caption{Illustration of the claim that $G(V)$ forms a $\delta$-covering of $\Omega$. The dashed circle is centered at $\bx'$ and has radius $\delta$. The point $\by'$ is on the line $\by_i\by_j$, and $\bx\by'$ is parallel to $\bx'\by_j$. We have $\by'\in\Omega$ by the convexity of $\Omega$, and $\dist(\bx, \by') \le \dist(\by_j, \bx')\le \delta$ by the similarity between the triangles $\triangle(\by_i\by_j\bx')$ and $\triangle(\by_i\by'\bx)$.}\label{FIG:LEM SKELETON}
	\end{figure}
	
	Let $\theta \le \frac{\delta^2}{4 n^2}$. Consider an edge $e_1\in E$. For any $e_2\in E$ and $e_2\neq e_1$, the length of $T_{\theta}(e_1)\cap e_2$ is at most $2\theta/\sin\alpha$, where $\alpha\in (0, \frac{\pi}{2})$ is the angle between $e_1, e_2$. On the other hand, since $\Omega$ is the unit ball and $|e_1|> \delta$ from the fact that $V$ is a $\delta$-packing of $\partial\Omega$, we must have $\sin\alpha \ge \frac{\delta}{2}$, therefore $|T_{\theta}(e_1)\cap e_2|\le \frac{4\theta}{\delta}$. Because $\text{card}(E) = \binom{n}{2} \le n^2$, the total length removed from $e_1$ is at most $n^2 \frac{4\theta}{\delta} \le \delta$. 
	
	We now prove that $G_{\theta}(V)$ is a $2\delta$-covering of $\Omega$. For any $\bx \in \Omega$, since $G(V)$ is a $\delta$-covering of $\Omega$, we can find $\by \in G(V)$ such that $\dist(\bx, \by) \le \delta$ and an edge $e \in E$ such that $\by \in e$. Because the total length removed from $e$ is at most $\delta$, we can find a point $\bt \in e\cap G_{\theta}(V)$ such that $\dist(\bt, \by) \le \delta$ and $\dist(\bx, \bt) \le \dist(\bx, \by) + \dist(\by, \bt) \le 2\delta$.
\end{proof} 
We remark that $\Omega$ is taken to be a unit ball in the above lemma only for simplicity. The proof can be easily adapted to the case when the principal curvatures of $\partial\Omega$ are bounded away from zero. In the following, we assume that the domain $\Omega$ is the unit ball and prove the global uniqueness by a constructive method. The idea is to use the fact that the quadratic term $\int_{\bbS^{d-1}} u(\bx, \bv) u(\bx, -\bv) d\bv$ contains certain ``singularities" when $g(\bx)$ is concentrated at a few points on the surface $\partial\Omega$.
\begin{theorem}\label{THM:UNIQ NONLINEAR}
	Let $\Omega$ be the unit ball in $\bbR^d$ and $\dist$ the Euclidean metric defined on $\bbR^d$. Suppose the assumptions ($\mathfrak{A}$1-$\mathfrak{A}$4) hold.  Let the coefficient $\sigma_{x,f}$ satisfy the following conditions:
	\begin{enumerate}
		\item $\sigma_{x,f}$ is $\alpha$-H\"{o}lder continuous, that is, there exists a constant $\kappa > 0$,  such that $\forall \bx,\by\in\Omega$, 
		\begin{equation}
		|\sigma_{x,f}(\bx) - \sigma_{x,f}(\by)| \le \kappa \dist(\bx, \by)^{\alpha}.
		\end{equation} 
		\item We can decompose $\sigma_{x,f} = \sigma_{x,f}^0 + \delta \sigma_{x,f}$, where $\sigma_{x,f}^0$ is the known background coefficient and the unknown $\delta\sigma_{x,f}$ has compact support in the interior subdomain $\Omega_r$ for some $r \in (0, 1)$. Here
		\begin{equation}
		\Omega_r := \{ \bx : \bx \in \Omega\;\text{and}\;\dist(\bx, \partial\Omega)\ge r\}.
		\end{equation}
	\end{enumerate}
	Then for any sufficiently small $\delta > 0$, we can choose an illumination source $g(\bx)\in L^{\infty}(\partial\Omega)$ such that the internal data $H$ permits a reconstruction $\widetilde{\sigma}_{x,f}$ such that $\forall \bx\in\Omega$,
	\begin{equation}
	|\tilde{\sigma}_{x,f}(\bx) - \sigma_{x,f}(\bx)| \le \cO(\delta^{\alpha}).
	\end{equation}
\end{theorem}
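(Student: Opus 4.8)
The plan is to exploit the structure of the leading quadratic term $\psi(\bx) = \int_{\bbS^{d-1}} u(\bx,\bv)u(\bx,-\bv)\,d\bv$ in the internal data $H$, which encodes ballistic transport along lines. First I would fix the source: using Lemma~\ref{LEM:NET}, choose a vertex set $V = \{\by_i\}_{i=1}^n \subset \partial\Omega$ with $n = \cO(\delta^{1-d})$ whose $\theta$-skeleton $G_\theta(V)$ is a $2\delta$-covering of $\Omega$, and take $g$ to be a sum of narrow isotropic bumps concentrated near the points $\by_i$ (for instance $g = \sum_i g_i$ with $g_i$ supported in a small surface ball around $\by_i$, of unit mass). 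The key transport estimate is a decomposition $u = u_{\mathrm{ball}} + u_{\mathrm{sc}}$, where $u_{\mathrm{ball}}(\bx,\bv) = \sum_i e^{-\int \sigma_{x,tf}} g_i(\ldots)\delta_{\bbS}(\bv - \bv_i(\bx))$-type ballistic contributions travelling in straight lines from the $\by_i$, and $u_{\mathrm{sc}}$ (the multiply-scattered part) is bounded in $L^\infty$ by $\cO(\sigma_{x,s}\cdot(\text{source mass}))$ with no concentration. Then on a point $\bx$ lying on the chord $e_{ij} = \by_i\by_j$ but away from all tube-overlaps (i.e. $\bx \in G_\theta(V)$), the product $u(\bx,\bv)u(\bx,-\bv)$ has a resonant spike: the ballistic ray from $\by_i$ arrives at $\bx$ along direction $\bv_{ij}$ and the ballistic ray from $\by_j$ arrives along $-\bv_{ij}$, so $\psi(\bx)$ picks up a term $\asymp e^{-\int_{\by_i}^{\by_j}\sigma_{x,tf}}$ times a factor depending only on the bump width, while the ballistic$\times$scattering and scattering$\times$scattering cross terms are lower order (by the $\theta$-skeleton construction there is no second pair of bright points whose chord passes through $\bx$, so no other ballistic-ballistic resonance occurs).

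With this in hand, the reconstruction is algebraic. From $H(\bx)$ and the decomposition \eqref{EQ:H}, the dominant term is $-\sigma_{x,tf}(\bx)\psi(\bx)$; since the background $\sigma_{x,t} = \sigma_{x,a}+\sigma_{x,s}$ is known, knowing $\sigma_{x,tf}(\bx)$ at a point is equivalent to knowing $\sigma_{x,f}(\bx)$ there. The ballistic-ballistic spike in $\psi(\bx)$ along $e_{ij}$ lets one solve for the line integral $\int_{\by_i}^{\by_j}\sigma_{x,tf}\,ds$ from the measured data (after dividing out the known bump normalization and controlling the lower-order terms), and differencing these line integrals over the graph $G(V)$ — or more directly, reading off the exponential decay rate as $\bx$ moves along $e_{ij} \cap G_\theta(V)$ — recovers $\sigma_{x,tf}$, hence $\sigma_{x,f}$, at every point of the skeleton. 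Since $\sigma_{x,f}$ is $\alpha$-Hölder and $G_\theta(V)$ is a $2\delta$-covering of $\Omega$, we extend from the skeleton to all of $\Omega$ by, say, nearest-point assignment $\widetilde{\sigma}_{x,f}(\bx) := \sigma_{x,f}^{\mathrm{rec}}(\bt(\bx))$ with $\dist(\bx,\bt(\bx)) \le 2\delta$; the Hölder bound then gives $|\widetilde{\sigma}_{x,f}(\bx) - \sigma_{x,f}(\bx)| \le \kappa(2\delta)^\alpha + (\text{reconstruction error on the skeleton})$. Finally one arranges that the reconstruction error on the skeleton is also $\cO(\delta^\alpha)$ by taking the bump widths and the tube parameter $\theta$ sufficiently small as functions of $\delta$ (Lemma~\ref{LEM:NET} allows $\theta \le \cO(\delta^2/n^2)$), which suppresses the ballistic-scattering and scattering-scattering contributions; the assumption that $\sigma_{x,f}$ is known on the boundary layer $\Omega\setminus\Omega_r$ handles the points near $\partial\Omega$ not well covered by chords, and also removes the ambiguity in $\sigma_{x,f}$ versus $\sigma_{x,tf}$ near the boundary.

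The main obstacle I expect is making the ballistic/scattering separation quantitative and uniform: one must show that for the chosen multi-bump source the multiply-scattered field $u_{\mathrm{sc}}$ contributes to $\psi(\bx)$ an amount that is genuinely smaller than the ballistic spike \emph{at the scale $\delta$}, uniformly over $\bx \in G_\theta(V)$ and over the $\cO(n^2)$ chords, and that the $\theta$-skeleton really does kill every spurious ballistic-ballistic coincidence (two distinct chords meeting at $\bx$). This requires careful bookkeeping of the bump normalization versus the number of bumps $n = \cO(\delta^{1-d})$, the Hölder modulus $\kappa$, and the scattering strength $\sfc_2$, plus the geometric fact (used in Lemma~\ref{LEM:NET}) that chords of the unit ball meeting at an interior point do so at an angle bounded below, so tube overlaps have controlled length. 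A secondary technical point is propagating the exponential-of-line-integral estimate for $u$ through the nonlinear coupling: $u$ itself depends on $\sigma_{x,f}$, so the estimate $e^{-\ell_\Omega \sup\sigma_{x,tf}}\inf g \le u \le \sup g$ (already available from the maximum principle as in Theorem~\ref{THM:LUNIQ}) must be combined with a stability estimate showing that the recovered line integrals depend Lipschitz-continuously on $H$, yielding the claimed $\cO(\delta^\alpha)$ accuracy and, as a by-product, the Lipschitz-type stability mentioned in the introduction.
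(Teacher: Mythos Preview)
Your overall architecture matches the paper: localized multi-bump source on a vertex set $V\subset\partial\Omega$ from Lemma~\ref{LEM:NET}, ballistic/scattering decomposition $u=\cB g_h+\cT\cB g_h+(I-\cT)^{-1}\cT^2\cB g_h$, dominance of the ballistic-ballistic term of $\psi(\bx)$ on the $\theta$-skeleton, then H\"older extension from the $2\delta$-covering. The error bookkeeping you flag (bump width vs.\ $n=\cO(\delta^{1-d})$, angle lower bound between chords) is exactly what the paper carries out, and with $h\le\theta=\cO(\delta^{2d})$ the cross terms are $\cO(n^2 h)\le\cO(\delta)$.

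There is, however, a genuine gap in your reconstruction step. On a chord point $\bz\in e_{ij}\cap G_\theta(V)$ the leading term of the data is
\[
H_h(\bz)\;=\;-\,\sigma_{x,tf}(\bz)\,E(\by_i,\by_j)\,\mathfrak B_h(\bz,\by_i,\by_j)\;+\;\cO(n^2 h),
\qquad E(\by_i,\by_j)=\exp\Bigl(-\!\int_{\by_i}^{\by_j}\!\sigma_{x,tf}\,ds\Bigr),
\]
where $\mathfrak B_h$ is a purely geometric, known factor. The exponent is the \emph{full} chord integral $\int_{\by_i}^{\by_j}\sigma_{x,tf}$, because $\cB g_h(\bz,\bv)\,\cB g_h(\bz,-\bv)$ multiplies $e^{-\int_{\by_i}^{\bz}}$ by $e^{-\int_{\bz}^{\by_j}}$. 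Hence $E(\by_i,\by_j)$ is \emph{constant} as $\bz$ moves along $e_{ij}$: there is no ``exponential decay rate'' to read off, and ``differencing line integrals'' is unavailable since each datum involves only the single full-chord integral. You are left with one equation in two unknowns, $\sigma_{x,tf}(\bz)$ and $E(\by_i,\by_j)$.

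The missing idea is the paper's ratio trick, which is precisely where the boundary-layer hypothesis enters. Pick a second point $\bz'\in e_{ij}\cap G_h(V)$ lying in the annulus $\Omega_{r-4\delta}\setminus\Omega_{r-2\delta}$, so that $\sigma_{x,tf}(\bz')=\sigma_{x,t}(\bz')+\sigma_{x,f}^0(\bz')$ is known. Since $E(\by_i,\by_j)$ is the same at $\bz$ and $\bz'$, dividing gives the explicit formula
\[
\sigma_{x,tf}(\bz)\;=\;\sigma_{x,tf}(\bz')\,\frac{H_h(\bz)}{H_h(\bz')}\,\frac{\mathfrak B_h(\bz',\by_i,\by_j)}{\mathfrak B_h(\bz,\by_i,\by_j)}\;+\;\cO(\delta),
\]
which is pointwise and does not require any tomographic inversion. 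Your remark that the boundary-layer knowledge ``removes the ambiguity in $\sigma_{x,f}$ versus $\sigma_{x,tf}$ near the boundary'' misidentifies its role (that ambiguity never exists, since $\sigma_{x,t}$ is known everywhere); its actual role is to supply the anchor value $\sigma_{x,tf}(\bz')$ that cancels the unknown chord attenuation $E(\by_i,\by_j)$. Once you insert this step, the rest of your outline goes through as written.
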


\begin{proof}
	Let $\delta\ll r/5$, $n = \cO(\delta^{1-d})$ and $\theta = \cO(\delta^{2d})$. We construct a vertex set $V= \{\by_j\}_{j=1}^n\subset \partial\Omega$ whose $\theta$-skeleton $G_{\theta}(V)$ forms a $2\delta$-covering of $\Omega$ as in Lemma~\ref{LEM:NET}.
	Let $B(\bx, s)$ denote the ball centered at $\bx$ with radius $s$. We consider an illumination source function $g_{h}(\bx)$ of the form
	\begin{equation}\label{EQ:Gh}
	g_{h}(\bx) = \sum_{j=1}^n \frac{1}{h^{l}}\chi_{D_j}(\bx),\quad  \bx\in\partial\Omega,
	\end{equation}
	where $D_j = B(\by_j, h)\cap \partial\Omega$, $\chi_{D_j}$ is the characteristic function of $D_j$, the exponent $l = (d-1)/2$, and the parameter $h \le \theta$ is sufficiently small such that $\{D_j\}_{j=1}^n$ are disjoint from each other. Then $G_h\supset G_{\theta}$ is also a $2\delta$-covering of $\Omega$ and $g_h\in L^{\infty}(\partial\Omega)$ for any $h > 0$.
	
	Define the operators $\cB:L^{\infty}(\partial\Omega)\rightarrow L^{\infty}(X)$ and $\cT:L^{\infty}(X)\rightarrow L^{\infty}(X)$  as
	\begin{equation}
	\begin{aligned}
	\cB f(\bx, \bv) &= f(\bx - \tau_{-}(\bx, \bv) \bv) \exp\left( -\int_{0}^{\tau_{-}(\bx, \bv)} \sigma_{x,tf}(\bx - s\bv ) ds\right),\\
	\cT f(\bx, \bv) &= \int_0^{\tau_{-}(\bx, \bv)} \exp\left( -\int_0^l \sigma_{x,tf} (\bx - s\bv) ds\right) \sigma_{x,s}(\bx - l\bv)\cK f(\bx-l\bv, \bv) dl.
	\end{aligned}
	\end{equation}
	The solution to the RTE~\eqref{EQ:SRTE} with boundary illumination source $g_h$ is
	\begin{equation}
	u_h(\bx, \bv) = \cB g_h + \cT \cB g_h +(I - \cT)^{-1}  \cT^2\cB g_h.
	\end{equation}
	Here $\cB g_h$ is the ballistic part of the solution, $\cT \cB g_h$ is the single scattering part, and $(I - \cT)^{-1}\cT^2\cB g_h$ is the multiple scattering part. For each point $\bx\in\Omega_{r-4\delta}\subset \Omega_{r/5}$, we have that
	\begin{equation}
	\begin{aligned}
	\cK \cB g_h(\bx, \bv) &\le \int_{\bbS^{d-1}} p(\bv\cdot \bv') g_h(\bx - \tau_{-}(\bx, \bv') \bv') d\bv' \le \sfc_9 \int_{\bbS^{d-1}} g_h(\bx - \tau_{-}(\bx, \bv') \bv') d\bv'\\
	&= \sfc_9 \sum_{j=1}^n \int_{\bbS^{d-1}}\frac{1}{h^{l}} \chi_{D_j}(\bx - \tau_{-}(\bx, \bv)\bv) d\bv  \\&\le \sfc_9 \sum_{j=1}^n \frac{1}{h^{l}} \cO\left(\frac{h}{r-4\delta}\right)^{d-1}  = \cO\left( n h^{(d-1)/2}  \right).
	\end{aligned}
	\end{equation}
	It follows that $\cT \cB g_h \le \cO(n h^{(d-1)/2})$ and $(I -\cT)^{-1} \cT^2 \cB g_h \le \cO(n h^{(d-1)/2})$. Hence for all $\bx \in \Omega_r$, we can write 
	\begin{equation}
	u_h(\bx, \bv) = \cB g_h(\bx, \bv) + \cO(n h^{(d-1)/2}).
	\end{equation}
	The internal data $H_h$ is
	\begin{equation}\label{EQ:HH}
	\begin{aligned}
	H_h(\bx) &= -\sigma_{x,tf} \int_{\bbS^{d-1}} u_h(\bx, \bv) u_h(\bx, -\bv) d\bv  + \sigma_{x,s} \int_{\bbS^{d-1}} \cK u_h(\bx, \bv) u_h(\bx, -\bv) d\bv \\
	&= -\sigma_{x,tf} \int_{\bbS^{d-1}} \cB g_h(\bx, \bv)\cB g_h(\bx, -\bv) d\bv + \cO(n^2 h^{d-1}) \\
	&= -\sigma_{x,tf}\int_{\bbS^{d-1}}\mathfrak{g}_h(\bx, \bv) \exp\left({-\int_{-\tau_{+}(\bx, \bv)}^{\tau_{-}(\bx,\bv)} \sigma_{x,tf}(\bx - s\bv)ds}\right)d\bv + \cO(n^2 h^{d-1}), 
	\end{aligned}
	\end{equation}
	where $\mathfrak{g}_h(\bx, \bv) = g_h(\bx - \tau_{-}(\bx, \bv) \bv) g_h(\bx +\tau_{+}(\bx, \bv)\bv)$. Let $\by = \bx + \tau_{+}(\bx, \bv)\bv$ and $\by' = \by -\tau_{-}(\by, \bv) \bv$ with $\bv = \frac{\by - \bx}{|\by - \bx|}$. Utilizing the transformation 
	$$d\bv = \frac{1}{\nu_{d-1}}\frac{|\bn(\by)\cdot \bv|}{|\bx - \by|^{d-1}}dS_{\by}\quad $$ 
	with $\bn(\by)$ being the unit normal vector at $\by$, and noticing
	\begin{equation}
	\int_{\Omega} \chi_{D_i}(\bx \pm \tau_{\pm}(\bx, \bv) \bv) d\bv = \cO(h^{d-1}),\quad\forall\bx\in\Omega_r,
	\end{equation}
	we can rewrite the formulation~\eqref{EQ:HH} as
	\begin{equation*}
	\begin{aligned}
	H_h(\bx) &= -\sigma_{x,tf} \frac{1}{\nu_{d-1}}\int_{\partial\Omega} g_h(\by) g_h(\by') \frac{E\left(\by, \by' \right)}{|\bx - \by|^{d-1}} \left|\bn(\by)\cdot \frac{\bx - \by}{|\bx  - \by|}\right| dS_{\by} + \cO(n^2 h ^{d-1}) \\
	&= -\sigma_{x,tf}\sum_{i,j=1}^n \frac{1}{h^{d-1}} \frac{1}{\nu_{d-1}}\int_{D_{ij}(\bx)}  \frac{E\left(\by, \by' \right)}{|\bx - \by|^{d-1}} \left|\bn(\by)\cdot \frac{\bx - \by}{|\bx  - \by|}\right| dS_{\by}  + \cO(n^2 h^{d-1}),
	\end{aligned}
	\end{equation*}
	where $D_{ij}(\bx) = \{\by:\by\in D_i, \by - \tau_{-}(\by, \bv) \bv\in D_j\;\text{with}\; \bv = \frac{\by - \bx}{|\by - \bx|}  \}$ and $E(\by, \by')$ is
	\begin{equation}
	E(\by, \by') = \exp\left(-|\by - \by'| \int_0^1 \sigma_{x,tf}(\by + s(\by' -\by)) ds \right).
	\end{equation}
	Next, since $|\by - \by_i|\le h$ and $|\by - \tau_{-}(\by, \bv)\bv - \by_j|\le h$, we take the Taylor expansion at $\by= \by_i$ and $\by'=\by_j$ for each integral over $D_{ij}$ and obtain
	\begin{equation}
	\begin{aligned}
	H_h(\bx) &=  -\sigma_{x,tf}\sum_{i\neq j}^n \frac{1}{h^{d-1}} \frac{1}{\nu_{d-1}}\int_{D_{ij}(\bx)}  \left(\frac{E\left(\by_i, \by_j \right)}{|\bx - \by_i|^{d-1}} \left|\bn(\by_i)\cdot \frac{\by_i - \by_j}{|\by_i  - \by_j|}\right| +\cO(h)\right) dS_{\by}  + \cO(n^2 h^{d-1}) \\
	&=  -\sigma_{x,tf}\sum_{i\neq j}^n \frac{1}{h^{d-1}} \frac{1}{\nu_{d-1}}\int_{D_{ij}(\bx)} \frac{E\left(\by_i, \by_j \right)}{|\bx - \by_i|^{d-1}} \left|\bn(\by_i)\cdot \frac{\by_i - \by_j}{|\by_i  - \by_j|}\right| dS_{\by}  + \cO(n^2 h).
	\end{aligned}
	\end{equation}
	
	\begin{figure}[!htb]
		\centering
		\includegraphics[scale=0.18]{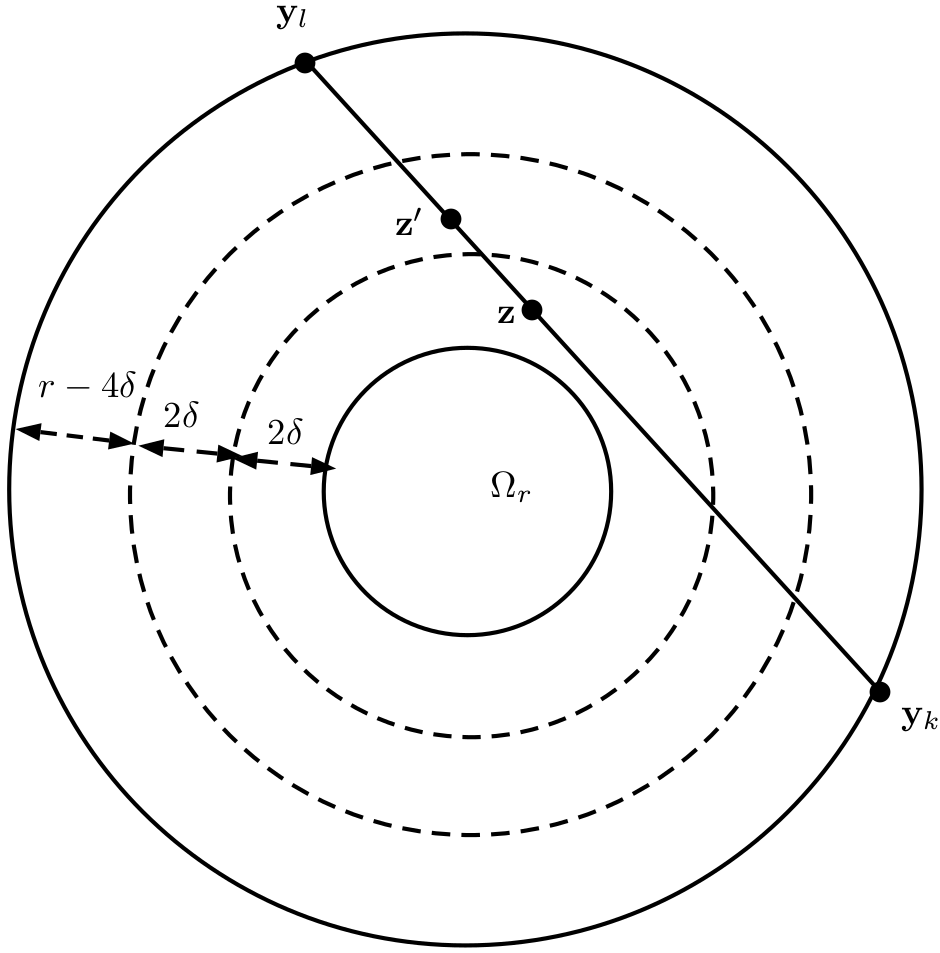}
		\caption{Illustration of \eqref{EQ:Z2}. $\Omega_r$ is the inner most ball. Let $\bz$ be a point on $ e_{lk}$ in $\Omega_{r-2\delta}\cap G_h(V)$. Since there is at most a total length of $\delta$ removed from $e_{lk}$ and the middle ring has a width of $2\delta$, we can find  $\bz'\in e_{lk}\cap \Omega_{r-4\delta}\cap \Omega_{r-2\delta}^{\complement}\cap G_h(V)$ such that \eqref{EQ:Z2} holds.  }\label{FIG:THM UNIQUE}
	\end{figure}
	For an arbitrary $\bz \in \Omega_{r-2\delta}\cap G_{h}(V)$, there exists a unique edge $e_{lk} \in E$ connecting $\by_l$ and $\by_k$ such that $\bz = t\by_k + (1-t)\by_l$ for some $t\in (0, 1)$. This means that $D_{ij}(\bz)\neq \emptyset$ if and only if $i=k, j=l$ or $i=l, j=k$. Therefore we have
	\begin{equation}
	H_h(\bz) = -\sigma_{x,tf}(\bz) E(\by_k, \by_l) \mathfrak{B}_h(\bz, \by_k, \by_l) + \cO(n^2 h),
	\end{equation}
	where $\mathfrak{B}_h(\bz, \by_k,\by_l)$ satisfies that for some constant $\sfc_{10}>0$,
	\begin{equation}\label{EQ:LOW}
	\begin{aligned}
	\mathfrak{B}_h(\bz, \by_k, \by_l) &= \frac{1}{h^{d-1}\nu_{d-1}}\left(\int_{D_{kl}(\bz)} \frac{\left|\bn(\by_k)\cdot \frac{\by_k - \by_l}{|\by_k -\by_l|} \right|}{|\bz - \by_k|^{d-1}}  dS_{\by} + \int_{D_{lk}(\bz)} \frac{\left|\bn(\by_l)\cdot \frac{\by_k - \by_l}{|\by_k -\by_l|} \right|}{|\bz - \by_l|^{d-1}}  dS_{\by}\right) \\
	&= \frac{|\by_k - \by_l|}{2\nu_{d-1}}\frac{1}{h^{d-1}}\left(\int_{D_{kl}(\bz)} \frac{1}{|\bz - \by_k|^{d-1}}  dS_{\by} + \int_{D_{lk}(\bz)} \frac{1}{|\bz - \by_l|^{d-1}}  dS_{\by}\right)\\
	&\ge \sfc_{10}|\by_k - \by_l|.
	\end{aligned}
	\end{equation}
	Here we have used $\bn(\by_k) = \by_k$ in the second equality. In general, if the principal curvatures of $\partial\Omega$ are bounded away from zero, the same lower bound in~\eqref{EQ:LOW} still holds. Since $\cB_h(\bz, \by_k, \by_l)\geq \sfc_{10} \delta$ for some $\sfc_{10} >0$ and $\cB_h(\bz, \by_k, \by_l)$ is independent of $\sigma_{x,f}$, we have
	\begin{equation}\label{EQ:Z1}
	\sigma_{x,tf}(\bz) = -\frac{H_h(\bz)}{E(\by_k, \by_l)\cB_h(\bz, \by_k, \by_l)} + \cO\left(\frac{n^2 h}{\delta}\right).
	\end{equation}
	On the other hand, by the argument in the proof of Lemma~\ref{LEM:NET}, there exists $\bz'\in \Omega_{r-2\delta}^{\complement}\cap \Omega_{r-4\delta} \cap e_{lk}\cap G_{h}(V)$ (see Fig~\ref{FIG:THM UNIQUE}) such that
	\begin{equation}\label{EQ:Z2}
	\sigma_{x,tf}(\bz') = -\frac{H_h(\bz')}{E(\by_k, \by_l)\cB_h(\bz', \by_k, \by_l)} + \cO\left(\frac{n^2 h}{\delta}\right),
	\end{equation}
	which is known from the background coefficient $\sigma_{x,f}^0$. Taking the ratio between~\eqref{EQ:Z1} and~\eqref{EQ:Z2}, we obtain
	\begin{equation}\label{EQ:sxf}
	\sigma_{x,tf}(\bz)=\sigma_{x,tf}(\bz') \frac{H_h(\bz)}{H_h(\bz')} \frac{\cB_h(\bz', \by_k, \by_l)}{\cB_h(\bz, \by_k, \by_l)} + \cO\left(\frac{n^2 h}{\delta}\right).
	\end{equation}
	Recalling that $n = \cO(\delta^{1-d})$ and $h \le \theta = \cO(\delta^{2d})$, we have $\cO\left(\frac{n^2 h}{\delta}\right)\le\cO(\delta)$. Therefore for each $\bz\in \Omega_{r-2\delta}\cap G_{h}(V)$, we can recover $\sigma_{x,tf}$ (hence $\sigma_{x,f}$) up to an error of $\cO(\delta)$. Since $\Omega_{r-2\delta}\cap G_{h}(V)$ is a $2\delta$-covering for $\Omega_r$, for any $\bx \in \Omega_r$, we can find a point $\bz \in \Omega_{r-2\delta}\cap G_{h}(V)$ such that $|\bz - \bx|\le 2\delta$. Using the H\"{o}lder continuity condition of $\sigma_{x,f}$, we conclude that the $L^{\infty}$ reconstruction error is bounded by $\cO(\delta) + \kappa (2\delta)^{\alpha} = \cO(\delta^{\alpha})$. 
\end{proof}
Notice that, if the conditions in Theorem~\ref{THM:UNIQ NONLINEAR} are not satisfied, then the uniqueness of the above nonlinear case might not hold under certain circumstances.
We demonstrate a numerical example which permits two distinct reconstructions for this situation in Example~\ref{EX:1}. In practice, the specific singular illumination source in~\eqref{EQ:Gh} with $h\to 0$ is not possible due to resolution limitation. However, for a moderately small~$h$, and a source $g_h$ which only concentrates at a few spots on the boundary, and when the total absorption coefficient $\sigma_{x,tf}$ is not too large, the ballistic signal still can be captured in $\int_{\bbS^{d-1}} u(\bx, \bv) u(\bx, -\bv) d\bv$ near its $h$-skeleton.  This could be used to recover the information on the $h$-skeleton approximately; see Fig~\ref{FIG:DEMO}. Although the uniqueness result of the above theorem is ``proximal'' and constructive, it does not rule out uniqueness for other types of illumination source.
\begin{figure}[!htb]
	\centering
	\includegraphics[height=0.3\textwidth]{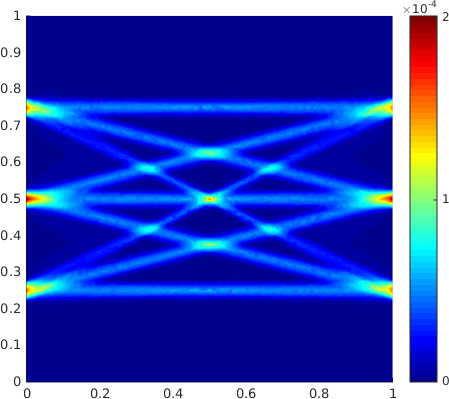}\qquad
	\includegraphics[height=0.3\textwidth]{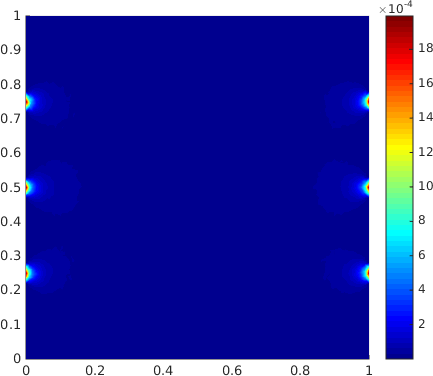}
	\caption{The left side is $\intS u(\bx, \bv)u(\bx, -\bv) d\bv$ and the right side is $\intS \cK u(\bx, \bv) u(\bx, -\bv) d\bv $. Here the solution $u(\bx, \bv)$ to~\eqref{EQ:SRTE} is solved using homogeneous coefficients $\sigma_{x,a}\equiv 0.2$, $\sigma_{x,s}\equiv 0.2$, $\sigma_{x,f}\equiv 0.5$ and isotropic scattering. The boundary illumination ~\eqref{EQ:Gh} consists of six points on the two sides with $h = \frac{1}{32}$. }\label{FIG:DEMO}
\end{figure}

\begin{theorem}\label{THM:STAB NONLINEAR}
	Under the same assumptions of Theorem~\ref{THM:UNIQ NONLINEAR}, let $ \sigma^1_{x,f}$ and $\sigma^2_{x,f}$ be two admissible absorption coefficients of the fluorophores. Choose the illumination source $g_h$ constructed in Theorem~\ref{THM:UNIQ NONLINEAR} with $\delta \ll 1$, suppose $H_1, H_2$ are the corresponding internal data associated with $ \sigma^1_{x,f}$ and $\sigma^2_{x,f}$ respectively. If $H_1$ and $H_2$ satisfy
	$
	\left\| H_1/H_2 - 1\right\|_{L^{\infty}(\Omega_{r-4\delta})} <1
	$,
	then
	\begin{equation}
	\|\sigma^1_{x,f} - \sigma^2_{x,f}\|_{L^{\infty}(\Omega)} \le \cO\left( \delta^{\alpha} + \|H_1/H_2 - 1\|_{L^{\infty}(\Omega_{r-4\delta})}\right).
	\end{equation}
\end{theorem}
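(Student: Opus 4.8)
I would derive this as a stability counterpart to the constructive argument of Theorem~\ref{THM:UNIQ NONLINEAR}, running that construction for $\sigma^1_{x,f}$ and $\sigma^2_{x,f}$ \emph{simultaneously} with one and the same source and then subtracting the two resulting reconstruction identities. Fix $\delta\ll r/5$ and, exactly as in that proof, take $n=\cO(\delta^{1-d})$, $\theta=\cO(\delta^{2d})$, $h\le\theta$, and build the vertex set $V=\{\by_j\}_{j=1}^n\subset\partial\Omega$, its $\theta$-skeleton, and the singular source $g_h$ of the form~\eqref{EQ:Gh}; these objects depend only on $\Omega$ and $\delta$, not on the unknown coefficient. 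Write $H_i=H^i_h$ for the internal datum~\eqref{EQ:H} produced by $\sigma^i_{x,f}$ with this source. Since $\sigma^1_{x,f},\sigma^2_{x,f}\in\cA_\sigma$ and both agree with the common background $\sigma^0_{x,f}$ outside $\Omega_r$, every asymptotic estimate in the proof of Theorem~\ref{THM:UNIQ NONLINEAR} (for $u_h$, for $H^i_h$, and for the Taylor remainders leading to~\eqref{EQ:sxf}) holds with $\cO$-constants uniform over $\cA_\sigma$. Hence, for each $\bz\in\Omega_{r-2\delta}\cap G_h(V)$ lying on an edge $e_{lk}$ and its companion $\bz'\in\Omega_{r-2\delta}^{\complement}\cap\Omega_{r-4\delta}\cap e_{lk}\cap G_h(V)$ as in~\eqref{EQ:Z2}, identity~\eqref{EQ:sxf} reads, for $i=1,2$,
\begin{equation*}
\sigma^i_{x,tf}(\bz)=\sigma^i_{x,tf}(\bz')\,\frac{H^i_h(\bz)}{H^i_h(\bz')}\,\frac{\cB_h(\bz',\by_k,\by_l)}{\cB_h(\bz,\by_k,\by_l)}+\cO\!\left(\frac{n^2h}{\delta}\right),\qquad \frac{n^2h}{\delta}=\cO(\delta)=\cO(\delta^\alpha),
\end{equation*}
where, by~\eqref{EQ:LOW}, the geometric factor $\cB_h$ is bounded below by $\cO(\delta)$, is strictly positive, and crucially does \emph{not} depend on the coefficient; in particular $H^i_h(\bz),H^i_h(\bz')\neq0$ and all ratios are well defined.

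The key point is that $\bz'$ has distance to $\partial\Omega$ strictly below $r$, hence lies outside $\supp(\delta\sigma^1_{x,f})\cup\supp(\delta\sigma^2_{x,f})\subset\Omega_r$; since $\sigma_{x,a}$ and $\sigma_{x,s}$ are common background as well, $\sigma^1_{x,tf}(\bz')=\sigma^2_{x,tf}(\bz')=:\sigma^0_{x,tf}(\bz')$, a coefficient-independent quantity bounded above and below. Subtracting the two instances of the displayed identity, the factor $\cB_h$ and the anchor $\sigma^0_{x,tf}(\bz')$ are shared, so with $\sigma^1_{x,tf}-\sigma^2_{x,tf}=\sigma^1_{x,f}-\sigma^2_{x,f}$ I obtain
\begin{equation*}
\sigma^1_{x,f}(\bz)-\sigma^2_{x,f}(\bz)=\sigma^0_{x,tf}(\bz')\,\frac{\cB_h(\bz',\by_k,\by_l)}{\cB_h(\bz,\by_k,\by_l)}\left(\frac{H^1_h(\bz)}{H^1_h(\bz')}-\frac{H^2_h(\bz)}{H^2_h(\bz')}\right)+\cO(\delta^\alpha).
\end{equation*}

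Next I would bound the data-ratio discrepancy. Put $\rho:=H_1/H_2-1$ on $\Omega_{r-4\delta}$; then $\bz,\bz'\in\Omega_{r-4\delta}$ (because $\Omega_{r-2\delta}\subset\Omega_{r-4\delta}$) and the hypothesis reads $\|\rho\|_{L^{\infty}(\Omega_{r-4\delta})}<1$. Writing $H_1=H_2(1+\rho)$, an elementary computation gives
\begin{equation*}
\frac{H^1_h(\bz)}{H^1_h(\bz')}-\frac{H^2_h(\bz)}{H^2_h(\bz')}=\frac{H^2_h(\bz)}{H^2_h(\bz')}\cdot\frac{\rho(\bz)-\rho(\bz')}{1+\rho(\bz')},
\end{equation*}
and the hypothesis is precisely what keeps $|1+\rho(\bz')|\ge1-\|\rho\|_{L^{\infty}(\Omega_{r-4\delta})}>0$. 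Applying~\eqref{EQ:sxf} once more for $i=2$ shows that the scalar prefactor multiplying this expression, $\sigma^0_{x,tf}(\bz')\frac{\cB_h(\bz',\by_k,\by_l)}{\cB_h(\bz,\by_k,\by_l)}\frac{H^2_h(\bz)}{H^2_h(\bz')}$, equals $\sigma^2_{x,tf}(\bz)+\cO(\delta^\alpha)$, hence is $\cO(1)$ uniformly. Combining the last two displays with $|\rho(\bz)-\rho(\bz')|\le2\|\rho\|_{L^{\infty}(\Omega_{r-4\delta})}$, I get, for every $\bz\in\Omega_{r-2\delta}\cap G_h(V)$,
\begin{equation*}
|\sigma^1_{x,f}(\bz)-\sigma^2_{x,f}(\bz)|\le\cO(1)\,\frac{2\|\rho\|_{L^{\infty}(\Omega_{r-4\delta})}}{1-\|\rho\|_{L^{\infty}(\Omega_{r-4\delta})}}+\cO(\delta^\alpha)=\cO\!\left(\delta^\alpha+\|H_1/H_2-1\|_{L^{\infty}(\Omega_{r-4\delta})}\right),
\end{equation*}
where in the final equality I split into $\|\rho\|_{L^{\infty}}\le\tfrac{1}{2}$ (then the fraction is $\le4\|\rho\|_{L^{\infty}}$) and $\tfrac{1}{2}<\|\rho\|_{L^{\infty}}<1$ (then the left-hand side is already $\le\sfc_4-\sfc_3$ by admissibility while the right-hand side is $\ge\tfrac{1}{2}$); in either case the stated bound holds after enlarging the implied constant.

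Finally I would pass from the skeleton to all of $\Omega$. For $\bx\in\Omega\setminus\Omega_r$ one has $\delta\sigma^1_{x,f}(\bx)=\delta\sigma^2_{x,f}(\bx)=0$, so the difference vanishes there. For $\bx\in\Omega_r$, the set $\Omega_{r-2\delta}\cap G_h(V)$ is a $2\delta$-covering of $\Omega_r$ (the covering property established in the proof of Theorem~\ref{THM:UNIQ NONLINEAR}, using $G_h\supset G_\theta$ and Lemma~\ref{LEM:NET}), so I pick $\bz$ in it with $|\bx-\bz|\le2\delta$ and use the $\alpha$-H\"{o}lder bound $|\sigma^i_{x,f}(\bx)-\sigma^i_{x,f}(\bz)|\le\kappa(2\delta)^\alpha=\cO(\delta^\alpha)$, $i=1,2$; the triangle inequality then gives $|\sigma^1_{x,f}(\bx)-\sigma^2_{x,f}(\bx)|\le\cO(\delta^\alpha+\|H_1/H_2-1\|_{L^{\infty}(\Omega_{r-4\delta})})$, and taking the supremum over $\bx\in\Omega$ finishes the proof. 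Since Theorem~\ref{THM:UNIQ NONLINEAR} already supplies the hard analytic input — the ballistic-dominance estimate for the singular source, the lower bound~\eqref{EQ:LOW} for $\cB_h$, and the covering lemma — I do not expect a genuine obstacle here; the only points needing care are the uniformity over $\cA_\sigma$ of the $\cO$-constants inherited from that proof and the exact cancellation of the coefficient-independent quantities $\cB_h$ and $\sigma^0_{x,tf}(\bz')$, which is what legitimizes subtracting the two reconstruction identities.
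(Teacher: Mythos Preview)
Your proposal is correct and follows essentially the same route as the paper: invoke the reconstruction identity~\eqref{EQ:sxf} for both coefficients at a skeleton point $\bz$ and its known anchor $\bz'$, exploit that $\sigma^1_{x,tf}(\bz')=\sigma^2_{x,tf}(\bz')$ since $\bz'\notin\Omega_r$, bound the resulting data-ratio discrepancy by $\frac{2\|\rho\|}{1-\|\rho\|}$, and then pass to $\Omega$ via the $2\delta$-covering and H\"older continuity. The only cosmetic difference is that the paper divides the two instances of~\eqref{EQ:sxf} (controlling $|\sigma^1_{x,tf}/\sigma^2_{x,tf}-1|$) while you subtract them; your explicit case split for $\|\rho\|\le\tfrac12$ versus $\tfrac12<\|\rho\|<1$ (using the a priori bound $\sfc_4-\sfc_3$) makes rigorous the passage from $\frac{2\|\rho\|}{1-\|\rho\|}$ to $\cO(\|\rho\|)$ that the paper leaves implicit.
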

\begin{proof}
	Using the same argument as in Theorem~\ref{THM:UNIQ NONLINEAR}, 
	for any $\bz \in \Omega_{r - 2\delta}\cap G_h(V)$, there is a unique edge $e_{lk}\in E$ such that $\bz \in e_{lk}$ and we can find $\bz'\in G_h(V) \cap \Omega_{r-4\delta}\cap \Omega_{r-2\delta}^{\complement}\cap e_{lk}$ such that
	\begin{equation}
	\begin{aligned}
	\sigma^1_{x,tf}(\bz)= \sigma^1_{x,tf}(\bz') \frac{H_1(\bz)}{H_1(\bz')} \frac{\cB_h(\bz', \by_k, \by_l)}{\cB_h(\bz, \by_k, \by_l)} + \cO\left(\delta\right),\\
	\sigma^2_{x,tf}(\bz)=\sigma^2_{x,tf}(\bz') \frac{H_2(\bz)}{H_2(\bz')} \frac{\cB_h(\bz', \by_k, \by_l)}{\cB_h(\bz, \by_k, \by_l)} + \cO\left(\delta\right),
	\end{aligned}
	\end{equation}
	where $\sigma^i_{x,tf} = \sigma_{x,a} +\sigma_{x,s} + \sigma^i_{x,f}$ for $i=1,2$. Taking the ratio of the above two equations and using the fact that $\sigma^i_{x,f}$ is known outside the subdomain $\Omega_r$, we obtain
	\begin{equation}
	\left|\frac{\sigma^1_{x,tf}(\bz)}{\sigma^2_{x,tf}(\bz)} - 1\right| = \left| \frac{H_1(\bz)}{H_2(\bz)} \frac{H_2(\bz')}{H_1(\bz')} - 1 +\cO(\delta)\right| \le \frac{2 \|H_1/H_2 - 1\|_{L^{\infty}(\Omega_{r-4\delta})}}{1 -  \|H_1/H_2 - 1\|_{L^{\infty}(\Omega_{r-4\delta})}} +\cO(\delta) .
	\end{equation}
	Since $\|H_1/H_2 - 1\|_{L^{\infty}(\Omega_{r-4\delta})} <1$, we obtain the error estimate for any $\bz \in \Omega_{r-2\delta}\cap G_h(V)$,
	\begin{equation}
	\left|\sigma^1_{x,f}(\bz) - \sigma^2_{x,f}(\bz)\right| \le \cO\left( \|H_1/H_2 - 1\|_{L^{\infty}(\Omega_{r-4\delta})} +\delta \right).
	\end{equation}
	Again, since $(\sigma^1_{x,f} - \sigma^2_{x,f})$ is  $\alpha$-H\"{o}lder continuous and $\Omega_{r-2\delta}\cap G_h(V)$ is a $2\delta$-covering of $\Omega_r$, we obtain
	\begin{equation}
	\|\sigma^1_{x,f} - \sigma^2_{x,f}\|_{L^{\infty}(\Omega)} \le \cO\left(  \|H_1/H_2 - 1\|_{L^{\infty}(\Omega_{r-4\delta})} + \delta^{\alpha} \right).
	\end{equation}
\end{proof}

\section{Reconstruction of $\eta$}\label{SEC:ETA}
Once $\sigma_{x,f}$ is reconstructed from the internal data $H$ at the excitation stage, we use the reconstructed coefficient $\sigma_{x,f}$ to reconstruct the quantum efficiency using the internal data $S$ at the emission stage. In practice, the reconstruction of $\sigma_{x,f}$ cannot be exact due to measurement noise. In the following theorem, we show that, as long as the error of the reconstructed $\sigma_{x,f}$ is controlled and a mild invertibility condition is satisfied, the error of the reconstructed $\eta$ is also controlled. This result can be understood by regarding the error in $\sigma_{x,f}$ as a perturbation of a compact operator, where the eigenvalues vary continuously with the perturbation~\cite{dunford1963linear}.
\begin{theorem}\label{THM:STAB ETA}
	Let $g(\bx)\in L^{\infty}(\Omega)$ and suppose that the assumptions ($\mathfrak{A}$1-$\mathfrak{A}4$) hold. Suppose $(\sigma_{x,f}, \eta), (\tilde{\sigma}_{x,f}, \tilde{\eta})\in \cA_{\sigma}\times \cA_{\eta}$ are two pairs of admissible coefficients and $\|\sigma_{x,f} - \tilde{\sigma}_{x,f}\|_{L^{\infty}(\Omega)}\le \eps'$ is sufficiently small.  Let $(u,w)$ and $(\tilde{u}, \tilde{w})$ be the solutions for the coefficients $(\sigma_{x,f},\eta)$ and $(\tilde{\sigma}_{x,f}, \tilde{\eta})$ respectively,  and let $S$ and $\tilde{S}$ be the corresponding internal data at emission stage for  $(\sigma_{x,f}, \eta)$ and $(\tilde{\sigma}_{x,f}, \tilde{\eta})$ respectively. Define the linear operators 
	\begin{equation*}
	\begin{aligned}
	\cL_x& := \bv\cdot \nabla + \sigma_{x,tf}, &\cS_x:= \sigma_{x,s}\cK, \\
	\cL_m&:= \bv \cdot \nabla +( \sigma_{m,s}+\sigma_{m,a}),\quad &\cS_m:= \sigma_{m,s}\cK.
	\end{aligned}
	\end{equation*}
	We then define the linear operators $\cA_i:L^2(\Omega)\to L^2(\Omega)$, $1\le i\le 3$ as
	\begin{equation}
	\begin{aligned}
	\cA_1 f &= \sigma_{x,f} (\cI u) (\cI \fW) f, \\
	\cA_2 f &= -(\sigma_{m,a} +\sigma_{m,s})\intS (I - \cL_m^{-1}\cS_m)^{-1} ( \cL_m^{-1} (\sigma_{x,f}(\cI u) f)) (\bx, \bv) \fW(\bx, \bv)d\bv \\&\quad+ \sigma_{m,s}\intS \cK(I - \cL_m^{-1}\cS_m)^{-1} ( \cL_m^{-1} (\sigma_{x,f}(\cI u) f))(\bx, \bv) \fW(\bx, \bv)d\bv,\\
	\cA_3 f&= -\sigma_{x,tf}\intS u(\bx, \bv)(I - \cL_x^{-1}\cS_x)^{-1} \cL_x^{-1} ( \sigma_{x,f}(\cI\fW) f ) (\bx, \bv) d\bv \\
	&\quad +\sigma_{x,s}\intS \cK u(\bx, \bv) (I - \cL_x^{-1}\cS_x)^{-1} \cL_x^{-1} (\sigma_{x,f}(\cI\fW) f)(\bx, \bv) d\bv,
	\end{aligned}
	\end{equation}
where $\fW$ is defined in equation~\eqref{EQ:W}. If zero is not an eigenvalue of $\cA_1 + \cA_2 + \cA_3$, then there exists a constant $C > 0$ such that
	\begin{equation}
	\|\eta - \tilde{\eta}\|_{L^2(\Omega)} \le C (\|S - \tilde{S}\|_{L^2(\Omega)} + \eps').
	\end{equation}
\end{theorem}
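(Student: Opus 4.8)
The starting point is the observation that, for a \emph{fixed} absorption coefficient $\sigma_{x,f}$, the internal data $S$ defined in~\eqref{EQ:S} is a \emph{linear} function of $\eta$, and that this linear map is exactly $\bA:=\cA_1+\cA_2+\cA_3$. Indeed, solving the emission transport equation in~\eqref{EQ:MODSRTE} with $\eps=0$ gives $w=(I-\cL_m^{-1}\cS_m)^{-1}\cL_m^{-1}\bigl(\sigma_{x,f}(\cI u)\,\eta\bigr)$, and solving~\eqref{EQ:vphi} expresses $\varphi$ through the analogous solution operator applied to $\sigma_{x,f}(\cI\fW)\,\eta$; substituting both into~\eqref{EQ:S} shows that the term $\eta\sigma_{x,f}(\cI u)(\cI\fW)$ contributes $\cA_1\eta$, the two $w$-terms contribute $\cA_2\eta$, and the two $\varphi$-terms contribute $\cA_3\eta$, so that $S=\bA\eta$. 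Writing $\widetilde{\bA}$ for the same operator built from $\tilde\sigma_{x,f}$ --- note that $\fW$ from~\eqref{EQ:W} is unchanged, since it depends only on the fixed emission background --- we likewise have $\tilde S=\widetilde{\bA}\tilde\eta$.

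First I would show that $\bA:L^2(\Omega)\to L^2(\Omega)$ is boundedly invertible. The multiplier $\sigma_{x,f}(\cI u)(\cI\fW)$ is bounded above and bounded below away from zero --- both $\cI u$ and $\cI\fW$ are strictly positive and bounded by the maximum principle for the transport equation together with ($\mathfrak{A}$2)--($\mathfrak{A}$4) --- so $\cA_1$ is a bounded isomorphism of $L^2(\Omega)$. By the averaging-lemma argument used in the proof of Theorem~\ref{THM:FREDHOLM S}, the angular averages entering $\cA_2$ and $\cA_3$ lie in $W^{2,1/2}(\Omega)$, which embeds compactly into $L^2(\Omega)$, so $\cA_2+\cA_3$ is compact. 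Hence $\bA=\cA_1+(\cA_2+\cA_3)$ is Fredholm of index zero, and since by hypothesis $0$ is not an eigenvalue of $\bA$, it is injective, therefore invertible, with $\bA^{-1}$ bounded on $L^2(\Omega)$.

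Next I would combine the resolvent identity $\bA^{-1}-\widetilde{\bA}^{-1}=\bA^{-1}(\widetilde{\bA}-\bA)\widetilde{\bA}^{-1}$ with $\eta=\bA^{-1}S$ and $\tilde\eta=\widetilde{\bA}^{-1}\tilde S$ to write
\begin{equation*}
\eta-\tilde\eta=\bA^{-1}(S-\tilde S)+\bA^{-1}(\widetilde{\bA}-\bA)\,\tilde\eta .
\end{equation*}
What remains is the quantitative bound $\|(\widetilde{\bA}-\bA)\tilde\eta\|_{L^2(\Omega)}\le C\eps'$. For this I would show that each $\cA_i$ depends Lipschitz-continuously on $\sigma_{x,f}$, measured in $L^\infty(\Omega)$, as an operator on $L^2(\Omega)$: the excitation solution obeys $\|u-\tilde u\|_{L^p(X)}\le C\|\sigma_{x,f}-\tilde\sigma_{x,f}\|_{L^\infty(\Omega)}\|g\|_{L^p(\partial\Omega)}$ as in the proof of Theorem~\ref{THM:FRECHET H}; the operators $\cL_m^{-1}$ and $(I-\cL_m^{-1}\cS_m)^{-1}$ do not involve $\sigma_{x,f}$, so the $\sigma_{x,f}$-dependence of $\cA_2$ enters only through $u$ and the explicit multiplier; and the excitation resolvents $\cL_x^{-1}$ and $(I-\cL_x^{-1}\cS_x)^{-1}$ in $\cA_3$ depend Lipschitz-continuously on $\sigma_{x,tf}$, hence on $\sigma_{x,f}$, by a resolvent expansion. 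Combining these with the product structure of $\cA_1,\cA_2,\cA_3$ and using $\|\tilde\eta\|_{L^\infty(\Omega)}\le\sfc_6<1$ yields the bound, and in fact $\|\bA-\widetilde{\bA}\|_{L^2\to L^2}\le C\eps'$; by openness of the set of invertible operators and continuity of inversion, this also guarantees that $\widetilde{\bA}$ is invertible with $\|\widetilde{\bA}^{-1}\|\le 2\|\bA^{-1}\|$ once $\eps'$ is small, which legitimizes the identity above. Taking $L^2$ norms then gives $\|\eta-\tilde\eta\|_{L^2(\Omega)}\le\|\bA^{-1}\|\bigl(\|S-\tilde S\|_{L^2(\Omega)}+C\eps'\bigr)$, which is the asserted estimate.

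The principal obstacle is the Lipschitz-in-$\sigma_{x,f}$ control of the composite operator $\bA$, and in particular tracking how the excitation resolvents $\cL_x^{-1}$ and $(I-\cL_x^{-1}\cS_x)^{-1}$ vary with $\sigma_{x,tf}$; once these estimates are in hand the remainder is standard Fredholm-perturbation bookkeeping. A secondary point requiring care is that all constants --- the norm of $\bA^{-1}$, the Lipschitz constants, and the two-sided bounds on $\cI u$ and $\cI\fW$ --- must be chosen uniformly over the admissible coefficient classes, which is ensured by the assumptions ($\mathfrak{A}$1)--($\mathfrak{A}$4).
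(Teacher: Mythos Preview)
Your proposal is correct and follows essentially the same approach as the paper. The paper expands $S-\tilde S$ directly by introducing the perturbations $\delta u,\delta w,\delta\varphi$ satisfying auxiliary transport equations and collecting terms into $(\cA_1+\cA_2+\cA_3)\delta\eta+\cR$ with $\|\cR\|_{L^2}\le C\|\delta\sigma_{x,f}\|_{L^\infty}$; your remainder $(\widetilde{\bA}-\bA)\tilde\eta$ is precisely $-\cR$, so the two arguments coincide up to phrasing. One minor simplification: since $\tilde S=\widetilde{\bA}\tilde\eta$ already gives $S-\tilde S=\bA\delta\eta-(\widetilde{\bA}-\bA)\tilde\eta$ algebraically, you do not actually need the resolvent identity or the invertibility of $\widetilde{\bA}$, only that of $\bA$.
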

\begin{proof}
	Let $\delta u(\bx, \bv)$, $\delta{w}(\bx, \bv)$ and $\delta \varphi(\bx, \bv)$ be the solutions to the following RTEs, 
	\begin{equation}
	\begin{aligned}
	\bv\cdot \nabla \delta{u}(\bx, \bv) + \sigma_{x,tf} \delta{u}(\bx, \bv) &= \sigma_{x,s}\cK \delta{u}(\bx, \bv) - \delta{\sigma}_{x,f}\tilde{u}\qquad&\text{ in }&X\\
	\bv \cdot \nabla \delta \varphi(\bx, \bv) + \sigma_{x,tf} \delta \varphi(\bx, \bv) &= \sigma_{x,s}\cK \delta \varphi(\bx, \bv) + (\eta \sigma_{x,f} - \tilde{\eta}\tilde{\sigma}_{x,f})\cI \fW\qquad&\text{ in }&X\\
	\bv\cdot \nabla \delta{w}(\bx, \bv) + (\sigma_{m,s} + \sigma_{m,a})\delta{w}(\bx, \bv) &= \sigma_{m,s}\cK \delta{w}(\bx, \bv) + \eta \sigma_{x,f} \cI u - \tilde{\eta}\tilde{\sigma}_{x,f}\cI \tilde{u}\qquad&\text{ in }&X\\
	\delta\varphi(\bx, \bv) = 0,\quad &\delta w(\bx, \bv) = 0,	\quad \delta {u}(\bx, \bv) = 0\qquad&\text{ on }&\Gamma_{-}.
	\end{aligned}
	\end{equation}
	Then we can write the solutions $\delta u, \delta\varphi, \delta w$ as
	\begin{equation}
	\begin{aligned}
	\delta u &= -(I - \cL_x^{-1}\cS_x)^{-1} \cL_x^{-1}  (\delta \sigma_{x,f}\tilde{u})\\
	\delta \varphi &= (I - \cL_x^{-1}\cS_x)^{-1} \cL_x^{-1}  ((\delta \eta) \sigma_{x,f}\cI\fW + \tilde{\eta}\delta\sigma_{x,f}\cI\fW)\\
	\delta w &= (I - \cL_m^{-1}\cS_m)^{-1}  \cL_m^{-1} ((\delta\eta)\sigma_{x,f}\cI u + \tilde{\eta}(\sigma_{x,f}\cI u - \tilde{\sigma}_{x,f}\cI\tilde{u})),
	\end{aligned}
	\end{equation}
	where $\delta\eta = \eta - \tilde{\eta}$, $\delta \sigma_{x,f} = \sigma_{x,f} - \tilde{\sigma}_{x,f}$. We then decompose $(S-\tilde{S})$ into two parts: the first part is a Fredholm operator which acts on $\delta \eta$ and the second part is from the perturbation in $\sigma_{x,f}$,
	\begin{equation*}
	\begin{aligned}
	S(\bx) -\tilde{S}(\bx) &= -(\sigma_{m,a} + \sigma_{m,s})\intS \delta w(\bx, \bv)\fW(\bx, \bv) d\bv + \sigma_{m,s}\intS \cK\delta w(\bx, \bv)\fW(\bx, \bv) d\bv \\&\quad + (\delta \eta) \sigma_{x,f}(\cI u)(\cI \fW) +\eta (\delta\sigma_{x,f}) (\cI u)(\cI\fW) + \eta \sigma_{x,f}(\cI \delta u)(\cI \fW) \\&\quad - \delta \sigma_{x,f} \intS u(\bx, \bv) \varphi(\bx, \bv) d\bv - \sigma_{x,tf} \intS \delta u(\bx, \bv) \varphi(\bx, \bv) d\bv - \sigma_{x,tf} \intS u(\bx, \bv) \delta \varphi(\bx, \bv) \\&\quad  + \sigma_{x,s}\intS \cK\delta u(\bx, \bv) \varphi(\bx, \bv) d\bv + \sigma_{x,s}\intS \cK u(\bx, \bv) \delta \varphi(\bx, \bv)\\
	&= (\cA_1 + \cA_2 +\cA_3)\delta\eta + \cR.
	\end{aligned}
	\end{equation*}
	It is easy to verify that the reminder $\cR$ has a trivial bound 
	\begin{equation}
	\|\cR\|_{L^2(\Omega)} \le C \|\delta\sigma_{x,f}\|_{L^{\infty}(\Omega)}
	\end{equation}
	for some constant $C$. From the averaging lemma~\cite{devore2001averaging,diperna1991lp}, $\cA_1+\cA_2+\cA_3$ is Fredholm. Thus if $0$ is not an eigenvalue, we have the invertibility of $\cA_1+\cA_2+\cA_3$ and 
	\begin{equation}
	\|\delta \eta \|_{L^2(\Omega)} \le C'  \|S - \tilde{S}\|_{L^2(\Omega)} + C''\|\delta\sigma_{x,f}\|_{L^{\infty}(\Omega)}
	\end{equation}
	for some constants $C'$ and $C''$. 
\end{proof}
\section{Numerical experiments}\label{SEC:NUM}
The forward solver of RTE has been studied extensively in recent years and there are many existing numerical algorithms~\cite{ren2016fast,gao2009fast,larsen2002simplified,fiveland1988three}. In our work, we implement the forward solver by the discrete ordinate method with low order collocation scheme, where the phase space $X = \Omega\times\bbS^{d-1}$ is discretized in both spatial and angular space. In the physical space $\Omega$, we take the uniform mesh, on which the nodes are denoted by $\{\bx_i\}_{i=1}^N$. In the angular space $\bbS^{d-1}$, we uniformly choose the angular directions $\{\bv_k\}_{k=1}^M$ for each node $\bx_i$. For a medium with weak scattering, the solution is solved quickly by the following source iteration:
\begin{equation}\label{EQ:SI}
\begin{aligned}
u^{T+1}(\bx_i, \bv_k) &= g(\bx_i - \tau_{-}(\bx_i, \bv_k)\bv_k)\exp\left(- \int_{0}^{\tau_{-}(\bx_i, \bv_k)} \sigma_{x,tf}(\bx_i - s\bv_k) ds \right) \\&\quad + \int_{0}^{\tau_{-}(\bx_i, \bv_k)} \exp\left( -\int_0^l   \sigma_{x,tf}(\bx_i - s\bv_k) ds  \right) \sigma_{x,s}\cK u^T(\bx_i-l\bv_k, \bv_k) dl,
\end{aligned}
\end{equation}
where $u^T(\bx,\bv)$ denotes the solution at the $T$-th iteration. Along each direction $\bv_k$, the source iteration~\eqref{EQ:SI} can be computed with complete independence, hence the algorithm has a natural parallelism. Regarding the path integrals, we use the trapezoid rule for the first path integral term, which represents the ballistic contribution, and for the second path integral term, we use Simpson's rule. The paralleled forward solver is implemented in C++ and wrapped with MATLAB's mex interface, the source code is hosted at Github\footnote{\href{https://github.com/lowrank/rte}{https://github.com/lowrank/rte}}. 

Although the Theorem~\ref{THM:UNIQ NONLINEAR} implies a constructive way to get an approximated estimate of $\sigma_{x,f}$, the singular localized sources require very fine mesh to resolve, which is not practical for numerical simulation with the discrete ordinate method. However, we still can seek for the reconstruction of $\sigma_{x,f}$ by minimizing the following objective functional:
\begin{equation}
J[\sigma_{x,f}] =  \frac{1}{2} \int_{\Omega} |H - H^{\ast}|^2 d\bx + \frac{\beta}{2} \int_{\Omega} |\nabla \sigma_{x,f}|^2 d\bx,
\end{equation}
where $H^{\ast}$ is the synthetic internal data from the excitation stage and $\beta$ is the parameter of regularization. Using the linearization formula~\eqref{EQ:V}, we have
\begin{equation}
J'[\sigma_{x,f}](\delta \sigma_{x,f}) = \int_{\Omega} (H - H^{\ast}) \left[ -\delta \sigma_{x,f}\psi + \intS Q(\bx, \bv) v(\bx, \bv) d\bv \right] d\bx + \beta \int_{\Omega} \nabla \delta\sigma_{x,f}\cdot \nabla \sigma_{x,f} d\bx ,
\end{equation}
where $\psi = \intS u(\bx, -\bv) u(\bx, \bv) d\bv$ and $v$ is the solution to~\eqref{EQ:V}. The function $Q(\bx, \bv)$ is defined through
\begin{equation}
Q(\bx, \bv) = -2\sigma_{x,tf} u(\bx, -\bv) + 2\sigma_{x,s}\cK u(\bx, -\bv).
\end{equation}
We then use the quasi-Newton method (L-BFGS) to minimize the functional $J$. To simplify the evaluation process of the gradient, the adjoint state method is usually adopted. Let $q(\bx, \bv)$ be the solution to the adjoint RTE
\begin{equation}
\begin{aligned}
-\bv \cdot \nabla q(\bx, \bv) + \sigma_{x,tf} q(\bx, \bv) &= \sigma_{x,s}\cK q - (H - H^{\ast}) Q(\bx, \bv) \quad &\text{ in }&X,\\
q(\bx, \bv) &= 0\quad &\text{ on }&\Gamma_{+}.
\end{aligned}
\end{equation}
The gradient of $J$ is 
\begin{equation}
J'[\sigma_{x,f}](\delta \sigma_{x,f}) = \int_{\Omega} \delta\sigma_{x,f} \left[-(H - H^{\ast}) \psi + \intS q(\bx, \bv) u(\bx, \bv) d\bv \right] d\bx + \beta \int_{\Omega} \nabla \delta \sigma_{x,f} \cdot \nabla \sigma_{x,f} d\bx 
\end{equation}
The quantum efficiency $\eta$ is then recovered by solving the corresponding linear inverse source problem using the reconstructed $\sigma_{x,f}$, which is
\begin{equation}
\tilde{\eta} = \argmin_{\eta\in \cA_{\eta}} \frac{1}{2}\int_{\Omega} |S - S^{\ast}|^2 d\bx  + \frac{\beta'}{2} \int_{\Omega} |\eta|^2 d\bx,
\end{equation}
where $S^{\ast}$ is the computed internal data from the emission stage and  $\beta'$ is the Tikhonov regularization parameter in case the problem is ill-posed.

In the following numerical experiments, the physical domain is $\Omega = [0,1]^2 \subset \bbR^2$, and the scattering phase function is chosen as the Henyey-Greenstein's function $p_{HG}$ in two dimension,
\begin{equation}
p_{HG}(\cos\theta) = \frac{1}{2\pi}\frac{1 - \texttt{g}^2 }{1 + \texttt{g}^2 - 2\texttt{g} \cos \theta},
\end{equation} 
where the constant $\texttt{g}$ is the medium's anisotropy parameter. To avoid the inverse crime, for the following numerical experiments, the synthetic data are generated on a fine discretization on both physical and angular spaces, while the inverse problems are solved on a coarse discretized phase space with roughly 600,000 unknowns.
The numerical experiments are performed in MATLAB and the source code is hosted on Github~\footnote{\href{https://github.com/lowrank/fumot-rte/}{https://github.com/lowrank/fumot-rte/}.}.
\subsection{Example 1}\label{EX:1}
In this example, we demonstrate the nonuniqueness of the reconstruction of $\sigma_{x,f}$ in a medium with relatively strong scattering. Here $\sigma_{x,f}$ remains unknown on the entire domain $\Omega$. The coefficients are
\begin{equation}
\sigma_{x,s}(x, y) = 10 + 0.2 x,\quad \sigma_{x,a}(x,y) = 0.2 + 0.2 y,\quad\sigma_{x,f}(x,y) = 0.5 + 0.5 x.
\end{equation}
The illumination source is chosen as $g\equiv 1$ on the boundary and the anisotropy parameter $\texttt{g} = 0.5$, the initial guess of $\sigma_{x,f}$ is  set to zero. We also let the regularization parameter $\beta =0$ and assume noiseless internal data. In Fig~\ref{FIG:NONUNIQUE}, we can observe that the reconstructed image of $\sigma_{x,f}$ is completely different from the exact coefficient.
\begin{figure}[!htb]
	\centering
	\includegraphics[height=0.23\textwidth]{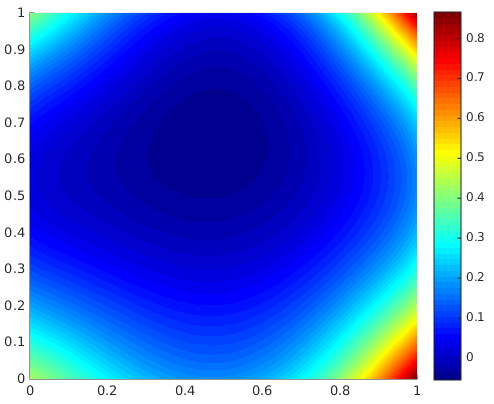}\qquad
	\includegraphics[height=0.23\textwidth]{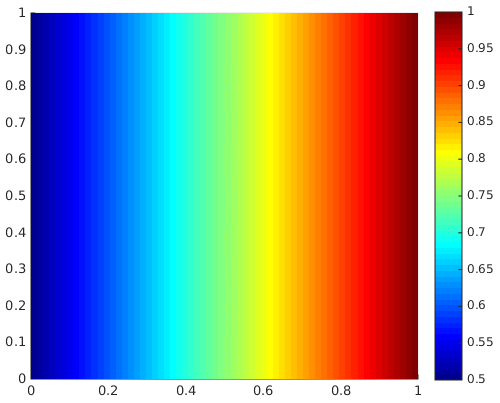}\quad 
	\includegraphics[height=0.23\textwidth]{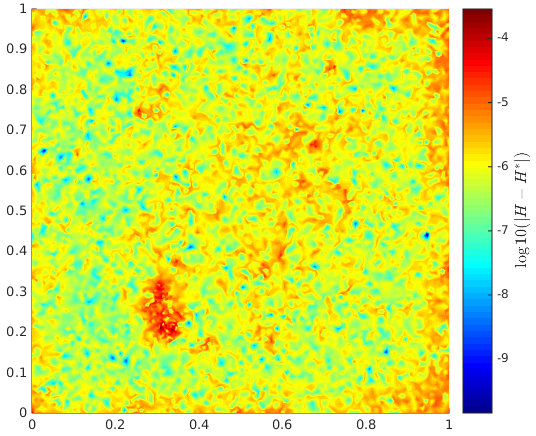}\quad 
	\caption{Nonuniqueness of $\sigma_{x,f}$. Left: The reconstructed $\sigma_{x,f}$. Middle: The exact $\sigma_{x,f}$. Right: The numerical difference between the exact internal data $H^{\ast}$ and the computed internal data $H$ from the reconstructed coefficient on the log scale. We can see that the difference between the internal data are quite small, however the difference between the coefficients is large.}\label{FIG:NONUNIQUE}
\end{figure}
\subsection{Example 2}
In this example, we consider an optically thin medium where $\sigma_{x,s}$ is moderately small and $\sigma_{x,f}$ remains unknown on the entire domain $\Omega$. We set coefficients
\begin{equation}
\begin{aligned}
\sigma_{x,s}(x, y) &= 0.2 + 0.2 x,\quad \sigma_{x,a}(x,y) = 0.2 + 0.2 y, \\
\sigma_{m,s}(x, y) &= 2.0 + 0.2 x,\quad \sigma_{m,a}(x,y) = 0.4 + 0.2 y,
\end{aligned}
\end{equation}
and let $\sigma_{x,f}$ be the modified Shepp-Logan phantom and $\eta$ the Derenzo phantom; see Fig~\ref{FIG:2XF}.
\begin{figure}[!htb]
	\centering
	\includegraphics[height=0.23\textwidth]{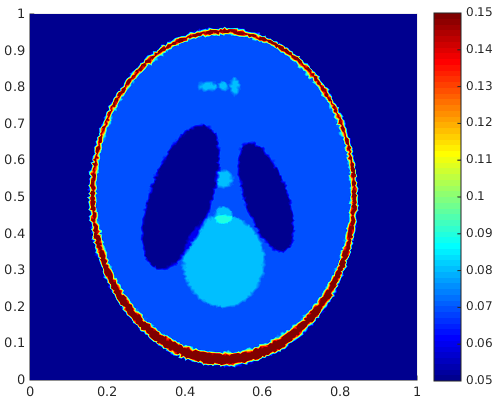}\qquad
	\includegraphics[height=0.23\textwidth]{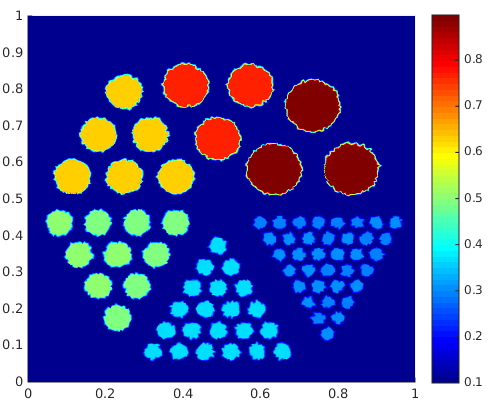}
	\caption{Left: the coefficient $\sigma_{x,f}$. Right: the quantum efficiency $\eta$.}\label{FIG:2XF}
\end{figure}
The anisotropy parameter $\texttt{g} = 0.5$ and the initial guess is generated randomly.  The illumination source $g$ is chosen as 
\begin{equation}
g(x, y) = 5\sin^2(4 \pi x) + 5\sin^2(4 \pi y),\quad (x,y)\in\partial\Omega.
\end{equation}
Such source simulates the ``singular" behavior in Theorem~\ref{THM:UNIQ NONLINEAR}, which results with relative strong signals along the lines between the ``points".  For the reconstruction, the regularization parameter is $\beta = 10^{-3}$, and the internal data is polluted by a multiplicative random noise ${H}^{\ast}\leftarrow H^{\ast}(1 +\tau \cU([-1,1]))$, with $\cU([-1,1])$ being the uniform distributed random variable and $\tau$ the noise level. The numerical reconstructions are shown in Fig~\ref{FIG:EX2}.
\begin{figure}[!htb]
	\centering
	\includegraphics[height=0.23\textwidth]{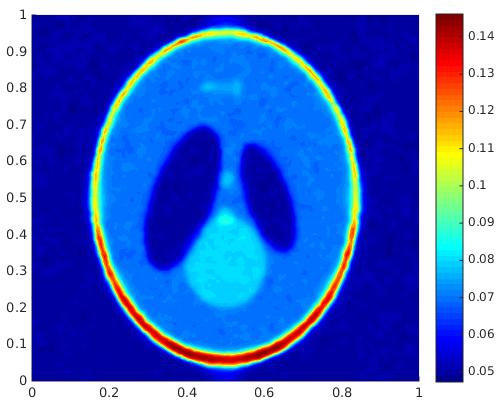}\qquad 
	\includegraphics[height=0.23\textwidth]{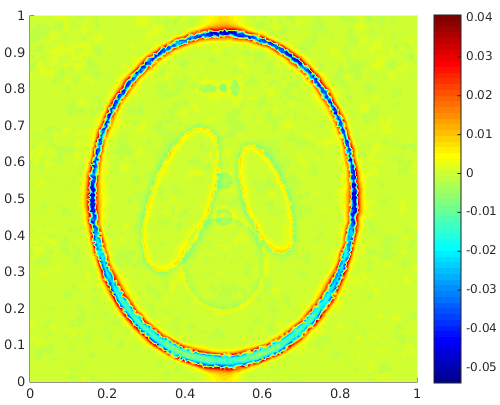}\\
	\includegraphics[height=0.23\textwidth]{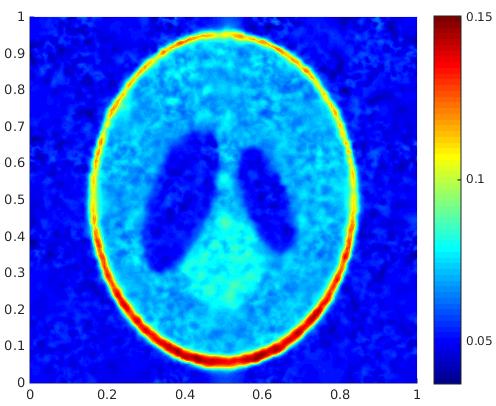}\qquad
	\includegraphics[height=0.23\textwidth]{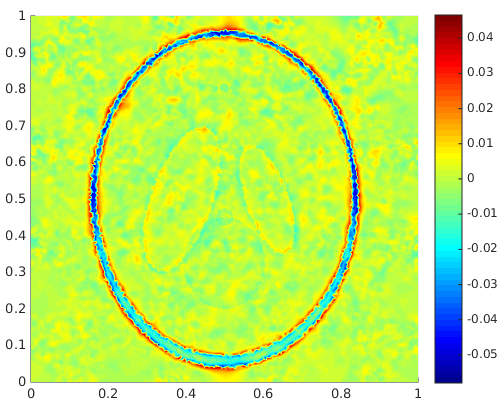}
	\caption{The reconstruction of $\sigma_{x,f}$. Top left: The reconstructed coefficient $\sigma_{x,f}$ with noise level $\tau = 1\%$. Top right: the error of $\sigma_{x,f}$, the relative $L^2$ error is $11.85\%$. Bottom left: The reconstructed coefficient $\sigma_{x,f}$ with noise level $\tau = 5\%$. Bottom right: the error of $\sigma_{x,f}$, the relative $L^2$ error is $12.67\%$. }\label{FIG:EX2}
\end{figure}
After the coefficient $\sigma_{x,f}$ has been recovered, we continue to use this $\sigma_{x,f}$ to reconstruct the quantum efficiency $\eta$ from the internal data $S^{\ast}$, we also pollute the data by a multiplicative random noise $S^{\ast} \leftarrow S^{\ast}(1 + \tau \cU([-1,1]))$ with the same noise level. The Tikhonov regularization parameter is $\beta' = 10^{-8}$. The corresponding numerical reconstructions are shown in Fig~\ref{FIG:EX2ETA}.
\begin{figure}[!htb]
	\centering
	\includegraphics[height=0.23\textwidth]{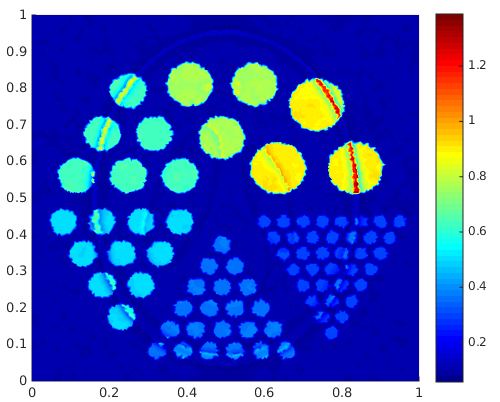}\qquad 
	\includegraphics[height=0.23\textwidth]{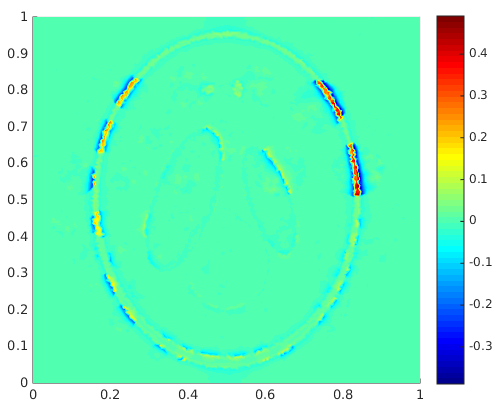}\\
	\includegraphics[height=0.23\textwidth]{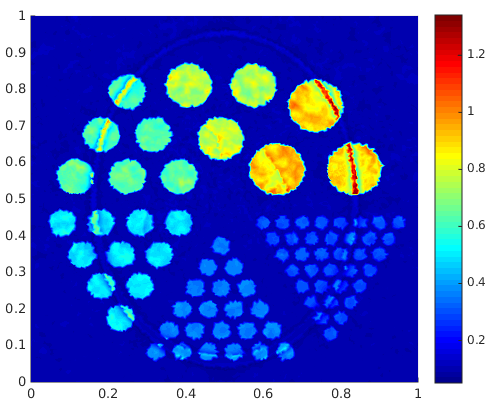}\qquad
	\includegraphics[height=0.23\textwidth]{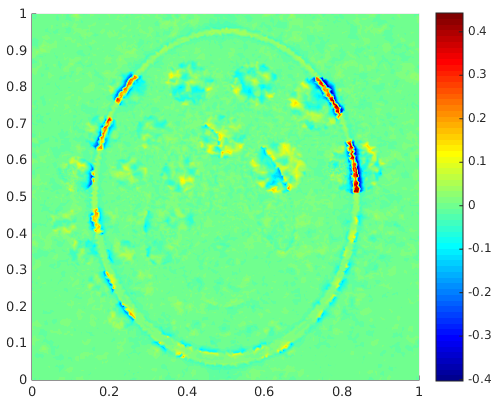}
	\caption{The reconstruction of $\eta$. Top left: The reconstructed coefficient $\eta$ with noise level $\tau = 1\%$. Top right: the error of $\eta$, the relative $L^2$ error is $9.61\%$. Bottom left: The reconstructed coefficient $\eta$ with noise level $\tau = 5\%$. Bottom right: the error of $\eta$, the relative $L^2$ error is $10.71\%$. }\label{FIG:EX2ETA}
\end{figure}

\section{Conclusion}\label{SEC:CONCLUSION}

In this paper, we studied the inverse problem in fluorescence ultrasound modulated optical tomography (fUMOT) in the transport regime with angularly averaged illumination and measurement. The inverse problem of interest is to recover the absorption coefficient of the fluorophores $\sigma_{x,f}$ and the quantum efficiency $\eta$.

We derived two internal functionals, $H(\bx)$ in \eqref{EQ:H} and $S(\bx)$ in \eqref{EQ:S}, from the boundary measurement. Assuming knowledge of the background optical coefficients $\sigma_{x,a}$, $\sigma_{m,a}$, $\sigma_{x,s}$ and $\sigma_{m,s}$, we investigated the uniqueness and stability of the nonlinear map $\sigma_{x,f}\mapsto H$ as well as its linearization $\delta\sigma_{x,f}\mapsto \mathfrak{H}$. For the linearized map, we showed $\delta\sigma_{x,f}$ is uniquely and stably determined by $\mathfrak{H}$ for optically thin media. For the nonlinear map, we proved $\sigma_{x,f}$ can be approximately reconstructed with properly chosen illumination sources, up to an error that can be made arbitrarily small. Upon successful recovery of $\sigma_{x,f}$, we proved the quantum efficiency $\eta$ is also uniquely and stably determined by the internal functional $S$; moreover, the error in the reconstruction of $\eta$ is controllable as long as that of $\sigma_{x,f}$ is. Finally, the resulting reconstruction procedures are numerically implemented to validate the theoretical conclusions.

\section*{Acknowledgment}

The research of YY was partly supported by the NSF Grant DMS-1715178, the AMS-Simons travel grant, and the start-up fund from the Michigan State University.

\bibliographystyle{siam}
\bibliography{main}
\end{document}